\DeclareMathOperator{\tlU}{\ensuremath\mathbf{U}}
\DeclareMathOperator{\tlX}{\ensuremath\mathbf{X}}
\DeclareMathOperator{\tlR}{\ensuremath\mathbf{R}}
\DeclareMathOperator{\tlF}{\ensuremath\mathbf{F}}
\DeclareMathOperator{\tlG}{\ensuremath\mathbf{G}}
\DeclareMathOperator{\tlWeakX}{\ensuremath\mathbf{\overline{X}}}
\newcommand{\AP}{\mathcal{AP}} 
\newcommand{\PiAP}{\Pi^{\AP}}	
\newcommand{\sat}{\models}
\newcommand{\den}[1]{\llbracket #1 \rrbracket}
\newcommand{\dia}[1] {\left< #1 \right>}
\newcommand{\var}{{\texttt{var}}}
\newcommand{\qc}{{\operatorname{QC}}}
\newcommand{\phivar}{\phi[\var]}
\newcommand{\logictrue}{\mathit{true}}
\newcommand{\logicfalse}{\mathit{false}}
\newcommand{\dquarter}{D_{\operatorname{quarter}}}
\newcommand{\dmonth}{D_{\operatorname{month}}}
\newcommand{\figref}[1]{Figure~\ref{#1}}
\newcommand{\secref}[1]{Section~\ref{#1}}
\newcommand{\defref}[1]{Definition~\ref{#1}}
\begin{document}

\title{Temporal-Logic Query Checking over Finite Data Streams}
\titlerunning{Query Checking over Finite Streams}
\author{Samuel Huang \and Rance Cleaveland}


\authorrunning{S. Huang \and R. Cleaveland} 

\institute{Department of Computer Science, University of Maryland, College Park, MD, USA
  \email{\{srhuang,rance\}@cs.umd.edu}
  }











\maketitle

\begin{abstract}
  This paper describes a technique for inferring temporal-logic properties for
  sets of finite data streams.  Such data streams arise in many domains,
  including server logs, program testing, and financial and marketing data; temporal-logic formulas that are satisfied by all data streams in
  a set can provide insight into the underlying dynamics of the system generating these
  streams.  Our approach makes use of so-called Linear Temporal Logic (LTL) queries,
  which are LTL formulas containing a missing subformula and interpreted over finite data streams.  Solving such a query
  involves computing a subformula that can be inserted into the query so that
  the resulting grounded formula is satisfied by all data streams in the set.
  We describe an automaton-driven approach to solving this query-checking
  problem and demonstrate a working implementation via a pilot study.

  \keywords{Linear temporal logic; query checking; finite data streams; automata} 
\end{abstract}

\newpage
\section{Introduction}

A central problem in system analysis may be phrased as the \emph{behavioral understanding problem}:  given concrete observations of a system's behavior, infer high-level properties characterizing this behavior.  Such properties can be used for a variety of purposes, including system specification, software understanding (when the system in question is software), and root-cause failure analysis.  Several researchers have studied variants of this problem in a several contexts, from software engineering~\cite{ackermann:rv2010} to data mining~\cite{agrawal:icde95} and artificial intelligence~\cite{fradkin:kis2015,georgala:ecai2016}.

This paper considers the following variant of the behavioral understanding problem:  given (1) a finite set of finite-duration observations of system behavior encoded as finite-length \emph{data streams}, and (2) a temporal-logic \emph{query} (``formula with a hole'') interpreted over such streams, infer formulas that, when plugged into the ``hole,'' yield a temporal-logic formula satisfied by all data streams in the given set.  For example, if the query in question has form $\tlG \texttt{var}$, where $\texttt{var}$ is the ``hole'' and $\tlG$ is the  ``always operator,'' then a solution $\phi$ would be a formula that holds at each time point across all data streams.  Other queries can be used to characterize when error conditions will be tripped, or when temporal correspondences hold between different basic properties captured in the data streams.  Using our query-solving technology, an engineer can collect different system executions, which might be in the form of system logs or experimentally observed data, and then pose queries to develop insights into the mechanisms underpinning system behavior.

The rest of this paper is structured as follows.  The next section discusses related work in temporal-logic query checking and defines finite data streams formally.  The section following then gives the variant of temporal logic, Finite Linear Temporal Logic (Finite LTL), that is the basis for our work and formalizes the query-checking problem for this logic over finite sets of data streams.   We then present a construction for computing finite-automaton-like structures from Finite LTL queries, as well as a method for solving these queries with respect to finite sets of data streams.  The paper concludes with a set of preliminary experimental results using a prototype implementation and a discussion of future research directions.


\section{Background and Related Work}
This section reviews relevant work in temporal logic and query checking and defines terminology for finite data streams.

\subsection{Temporal Logic}

\emph{Temporal logics} refer to a class of formalisms for reasoning about system behavior that evolves over time.
First introduced to the computer-science community by Amir Pnueli~\cite{pnueli:fcs1977}, temporal logics have been extensively studied as the basis for formal specification and verification of many types of hardware~\cite{eisner:2018} and software~\cite{holzmann:2004} systems.
These logics extend traditional propositional / first-order logics with additional operators, or \emph{modalities}, for describing when different system properties become, and remain, true.
When a system is finite-state, the task of checking whether or not it satisfies a given temporal property is usually decidable; \emph{model checkers}~\cite{baier:2008,clarke:2018} are tools used for this purpose.
Similarly, checking whether a temporal formula is \emph{satisfiable} (i.e.\/ a system exists that can satisfy it) is also in general decidable, in which case the temporal logic itself is often termed decidable~\cite{sistla:jacm1985}.
The rest of this subsections briefly reviews different temporal logics relevant for this paper.

\subsubsection{Linear Temporal Logic (LTL) and Finite Variants.}
Pnueli's original logic, and its variants, are called \emph{Linear Temporal Logic}, or LTL, because formulas are interpreted with respect to infinite sequences of system states.  These sequences can be thought of as non-terminating system executions; a system satisfies an LTL formula if all its executions do.  LTL typically extends propositional logic with a single modality $\tlU$, or ``until,'' from which other modalities such as $\tlF$ (``eventually'') and $\tlG$ (``always'') can be derived.  As an example, the LTL formula $\tlG (\textit{acq} \implies \tlF \textit{rel})$ holds of an execution if, whenever proposition \textit{acq} is true, signifying that a given resource has been acquired, then eventually proposition \textit{rel} becomes true, meaning that the resource has been released.  Model-checking and satisfiability checking are both decidable for LTL and typically rely on the use of automata-theoretic techniques~\cite{vardi:automata1986}.

In this paper we are concerned with LTL interpreted over \emph{finite}, rather than infinite, sequences of states.  Later we define the logic formally, which we call Finite LTL~\cite{huang:arxiv2019}.  Here we describe other accounts of LTL interpreted with respect to finite sequences.
De Giacomo and Vardi \cite{degiacomo:ijcai2013} give a version of finite-sequence LTL ($LTL_{f}$) 
and discuss how another logic, Linear Dynamic Logic over finite traces, $LDL_{f}$, is more expressive than $LTL_{f}$ but still has a PSPACE-complete satisfiability problem.
De Giacomo et al.~\cite{degiacomo:aaai2014} argue about how conversion from standard LTL over infinite sequences to finite sequences is often misused or misappropriated and give a definition close to $LTL_{f}$.
Ro{\c{s}}u \cite{rosu:rv2016} presents a sound and complete proof system for his version of finite-sequence temporal logic.
Fionda and Greco \cite{fionda:aaai2016} restrict negation to only atomic formulas and  consider a fragment of finite-sequence LTL ($LTL_{f,s}$) that allows only a single atomic proposition to be true at each time instance.  All of these logics require the sequences used to interpret formulas to be non-empty.  In contrast, our
Finite LTL, while syntactically identical to some of these logics~\cite{degiacomo:aaai2014,degiacomo:ijcai2013}, allows empty sequences as models also.  This choice is driven by algorithmic concerns.

We close our discussion of LTL with a discussion of finite interpretations of LTL used in runtime-monitoring applications~\cite{leucker:2017,rosu:rv2016}.  The interpretations of these logics reflect when the given finite sequence can be extended to an infinite sequence satisfying the LTL formula in the traditional sense.  As such, these semantic accounts return one of three truth values, given a finite sequence and a formula:  ``must satisfy'' (meaning every infinite extension of the finite formula satisfies the LTL formula); ``may satisfy'' (some infinite extension, but not all extensions, of the finite sequence satisfies the LTL formula); and ``does not satisfy'' (no infinite extension of the finite sequence satisfies the LTL formula).

\subsubsection{Computation Tree Logic (CTL).}

Computation Tree Logic~\cite{clarke:lop1981}, or CTL, extends the modalities of LTL given above with so-called \emph{path quantifiers} describing whether all, or some, infinite executions satisfy the given modality.  CTL formulas are interpreted with respect to \emph{computation trees} rather than individual executions, where each node is labeled with a (not necessarily distinct) system state and the edges indicate the tree reachable after the next execution state.  These trees are typically represented implicitly as Kripke structures~\cite{browne:tcs1988}, which represent system state spaces as graphs.  When these Kripke structures are finite-state, CTL model checking is decidable in time linear in the number of states~\cite{clarke:toplas1986}.

\subsection{Query Checking}
Temporal-logic query checking was first developed by Chan in the early 2000s \cite{chan:cav2000}.  His formulation of the problem was as follows:  given a Kripke structure~\cite{browne:tcs1988} describing the operational behavior of a system, and a Computation-Tree Logic (CTL)~\cite{clarke:lop1981} query, or formula containing placeholders for missing subformulas, compute propositional formulas that, when plugged into the placeholders, yield a formula satisfied by the Kripke structure.  Chan further identified a fragment of CTL, which he called CTL$^{v}$, whose placeholders have unique strongest solutions, and presented an efficient algorithm for computing these.
 Subsequent work has extended the classes of CTL queries that can be checked~\cite{bruns:lics2001} and explored its use in system understanding~\cite{gurfinkel:tse2003}. Others~\cite{chockler:hsvt2011,huang:avocs2017} have also considered LTL, rather than CTL, as the basis for queries that are solved with respect to Kripke structures.  Because of its relevance to the current work, we describe our results on LTL query checking~\cite{huang:avocs2017} in more detail.  Drawing inspiration from the
traditional automaton-based LTL model-checking approach as first presented by Vardi and Wolper
\cite{vardi:automata1986}, we define a construction for converting a LTL query $\phivar$, where $\var$ represents a placeholder, into a so-called
B\"uchi template automaton $B_{\neg \phivar}$.  These template automata are like traditional B\"uchi automata except that transitions are labeled by propositional formulas over atomic propositions; they may also involve $\var$.  The property enjoyed by $B_{\neg \phivar}$ is that for every propositional formula $\gamma$, $B_{\neg \phi[\gamma]}$, where $\phi[\gamma]$ is the formula obtained by replacing all instances of $\var$ by $\gamma$, accepts exactly the infinite state sequences that \emph{violate} $\phi[\gamma]$.  This template automaton is then composed with the B\"uchi automaton corresponding to the given Kripke structure, yielding another template automaton $B_c$.  Choices for $\var$ in this automaton can result in edge labels being logically
unsatisfiable, which is akin to deleting the transition from $B_{c}$.  Following a line of reasoning 
parallel to the initial Vardi and Wolper work, should a solution $\gamma$ for $\var$ be selected that makes the
language $L(B_{c})$ empty, the same solution $\gamma$ makes every execution of the Kripke structure satisfy $\phi[\gamma]$, and hence the Kripke structure as a whole satisfies $\phi[\gamma]$.
The problem of finding a solution $\gamma$ for $\var$ is therefore translated into the challenge of identifying an assignment $\gamma$ for $\var$ such that no accepting path exists through the composed automaton $B_{c}$.  That paper gave a singly exponential algorithm for solving this problem and reported on a proof-of-concept implementation.


\subsection{Data Streams}
We now give formal definitions for data streams.
\begin{definition}\label{def:data-stream}
Let  $\AP$ be a set of atomic propositions and $\mathbb{N}$ the set of natural numbers.  Then a \emph{data stream} over $\AP$ is a finite sequence
$$
(t_0, A_0) \ldots (t_{n-1}, A_{n-1}) \in (\mathbb{N} \times 2^{\AP})^*
$$
  such that $t_{i} \leq t_{j}$ holds for all $0 \leq i \leq j < n$.  We sometimes refer to $(t_i, A_i)$ in a data stream as an \emph{observation} and $t_i$ as the \emph{time stamp} of the observation. We use $\PiAP$ to represent the set of all data streams over $\AP$.
\end{definition}
Intuitively, atomic propositions are observations that can be made about the state of a system as it executes.  Data stream $\pi = (t_0, A_0) \ldots (t_{n-1}, A_{n-1})$ can then be seen as the result of observing the system for a finite period of time, where at each time instant $t_i$ the atomic propositions $A_i \subseteq \AP$ are true while those in $\AP / A_i$ are false. The condition imposed by Definition~\ref{def:data-stream} on time stamps requires that time advances monotonically throughout the data stream.  We use $\varepsilon$ to denote the empty data stream (the length-0 sequence).  We write $|\pi| = n$ for the length of data stream $\pi = (t_0, A_0) \ldots (t_{n-1}, A_{n-1})$, $\pi_i = (t_i, A_i)$ for the $i^{\textnormal{th}}$-indexed step in in $\pi$, and $\pi(i) = (t_i, A_i) \ldots (t_{n-1},A_{n-1})$ for the suffix of $\pi$ obtained by removing the first $i$ elements from the data stream.  Note that $\pi_i$ is only defined when $i < |\pi|$, while $\pi(i)$ is defined when $i \leq |\pi|$, and that $\pi(0) = \pi$ and $\pi(|\pi|) = \varepsilon$.

In the rest of the paper we will focus on so-called \emph{normalized} data streams, which are defined as follows.

\begin{definition}\label{def:normalized-data-stream}
Data stream $\pi$ over $AP$ is \emph{normalized} if and only if for all $i$ such that $0 \leq i < |\pi|$, $\pi_i$ has form $(i, A_i)$.
\end{definition}
In a normalized data stream, the time stamps of the elements in the sequence begin at 0 and increase by 1 at every step.  In such data streams we can omit the explicit time stamp and instead represent the stream as a finite sequence $A_0 \ldots A_{n-1}$.  For normalized data stream $\pi = A_0 \ldots A_{n-1}$ we abuse notation and write $\pi(i)$ as follows:  $\pi(i) = A_i \ldots A_{n-1}$.  This definition makes $\pi(i)$ normalized.\footnote{This detail, while necessary to point out, is not important in what follows, since the properties we consider in this paper are insensitive to specific time-stamp values.}

\section{Finite LTL Query Checking}


This section introduces the Finite LTL Query Checking problem over normalized data streams.  We begin by defining the logic Finite LTL that is the basis for our queries.  We then show how to define queries based on this logic.

\subsection{Finite LTL}

LTL is interpreted with respect to infinite sequences of states; in contrast, our data streams are finite, reflecting the intuition that an observation of a system execution must end at some point.  Accordingly, while Finite LTL has the same syntax as LTL, its semantics is different, as it is given in terms of normalized data streams.  In what follows, fix a (nonempty) set $\AP$ of atomic propositions.

\subsubsection{Syntax of Finite LTL.}
The set of Finite LTL formulas is defined by the following grammar, where $a \in \AP$.
$$
\phi ::=  a \mid \neg \phi \mid \phi_{1} \land \phi_{2} \mid \tlX \phi \mid \phi_{1} \tlU \phi_{2}
$$
Finite LTL formulas may be constructed from atomic propositions using the traditional propositional operators $\neg$ (``not'') and $\land$ (``and''), as well as the modalities of ``next'' ($\tlX$) and ``until'' ($\tlU$).  We use $\Phi^{\AP}$ to refer to the set of all Finite LTL formulas, omitting $\AP$ if it is clear from context.
We call formulas that do not involve any use of $\tlX$ or $\tlU$ \emph{propositional}, and write $\Gamma^{\AP} \subsetneq \Phi^\AP$ for the set of all such propositional formulas.  We also have the following derived notations.
\begin{minipage}{0.5\textwidth}
\begin{eqnarray*}
\logicfalse	&=& a \land \neg a \\
\logictrue 	&=& \neg \logicfalse\\
\phi_1 \lor \phi_2	&=& \neg ((\neg \phi_1) \land (\neg \phi_2))\\
\phi_1 \implies \phi_2 &=& (\neg \phi_1) \lor \phi_2
\end{eqnarray*}
\mbox{}
\end{minipage}
\begin{minipage}{0.48\textwidth}
\begin{eqnarray*}
\phi_1 \tlR \phi_2	&=& \neg ((\neg \phi_1) \tlU (\neg \phi_1))\\
\tlWeakX \phi		&=& \neg \tlX (\neg \phi)\\
\tlF \phi			&=& \mathit{true} \tlU \phi\\
\tlG \phi			&=& \neg \tlF (\neg \phi)
\end{eqnarray*}
\mbox{}
\end{minipage}

\noindent
The constants $\mathit{false}$ and $\mathit{true}$, and the operators $\land$ and $\lor$, and $\tlU$ and $\tlR$, are duals in the usual logical sense, with $\tlR$ sometimes referred to as the ``release'' operator.  We introduce $\tlWeakX$ (``weak next'') as the dual for $\tlX$.  That this operator is needed is due to our use of (finite) data streams to interpret Finite LTL formulas; this means that, in contrast to regular LTL, $\tlX$ is not its own dual.  This point is elaborated on later.  Finally, the duals $\tlF$ and $\tlG$ capture the usual notions of ``eventually'' and ``always'', respectively.

\subsubsection{Semantics of Finite LTL.}

The semantics of Finite LTL is given as a relation $\pi \sat \phi$ defining when normalized data stream $\pi$ satisfies formula $\phi$.

\begin{definition}\label{finite-ltl-semantics}
Let $\phi$ be a Finite LTL formula, and let $\pi$ be a normalized data stream.  Then $\pi \models \phi$ is defined inductively on the structure of $\phi$ as follows.
\begin{itemize}
\item $\pi \sat a$ iff $|\pi| \geq 1$ and $a \in \pi_0$
\item $\pi \sat \neg \phi$ iff $\pi \not\sat \phi$
\item $\pi \sat \phi_1 \land \phi_2$ iff $\pi \sat \phi_1$ and $\pi \sat \phi_2$
\item $\pi \sat \tlX \phi$ iff $|\pi| \geq 1$ and $\pi(1) \sat \phi$
\item $\pi \sat \phi_1 \tlU \phi_2$ iff $\exists j \colon 0 \leq j \leq |\pi| \colon \pi(j) \sat \phi_2$ and $\forall k \colon 0 \leq k < j \colon \pi(k) \sat \phi_1$
\end{itemize}
We write $\den{\phi}$ for the set $\{ \pi \mid \pi \sat \phi \}$ of all data streams satisfying $\phi$, and say that $\phi_1$ and $\phi_2$ are \emph{logically equivalent}, notation $\phi_1 \equiv \phi_2$, if $\den{\phi_1} = \den{\phi_2}$.  We say that $\phi_1$ is \emph{weaker than} $\phi_2$ (and $\phi_2$ is \emph{stronger than} $\phi_1$) if $\den{\phi_2} \subseteq \den{\phi_1}$, and write $\phi_1 \leq \phi_2$ in this case.  If $\phi_1 \leq \phi_2$ but $\phi_1 \not\equiv \phi_2$ we say $\phi_1$ is \emph{strictly weaker than} $\phi_2$ ($\phi_2$ is \emph{strictly stronger than} $\phi_1$).
\end{definition}

Intuitively, $\pi \models \phi$ holds if the data stream $\pi$ satisfies $\phi$.  In particular, data stream $\pi$ satisfies atomic proposition $a$ iff $|\pi| > 0$ (so $\pi_0$, the ``current state'', is defined) and $a \in \pi_0$, while $\pi$ satisfies $\neg \phi$ iff it fails to satisfy $\phi$. Satisfying $\phi_1 \land \phi_2$ requires satisfying both $\phi_1$ and $\phi_2$ individually.  For $\pi$ to satisfy $\tlX \phi$ it must be the case that $|\pi| > 0$, so that $\pi(0)$ exists; then $\pi$ makes $\tlX \phi$ true provided that $\pi(0)$, the sequence beginning in the ``next state'' of $\pi$ satisfies $\phi$.  Finally, $\pi$ satisfies $\phi_1 \tlU \phi_2$ iff there is some point in the sequence at which $\phi_2$ becomes true, and at all points in the sequence leading up to that point, $\phi_1$ holds.

The meaning of the derived operators can be understood from their definitions, but we do wish to comment on $\tlF$, $\tlG$ and $\tlWeakX$.  In particular, $\pi \sat \tlF \phi$ holds if iff at some point in $\pi$, $\phi$ is satisfied.  Similarly, $\pi \sat \tlG \phi$ holds iff at every point in $\pi$, $\phi$ is true.  As for $\tlWeakX \phi$, its semantics is nuanced:  $\pi$ can satisfy $\tlWeakX \phi$ if either $\pi = \varepsilon$
(because $\varepsilon \not\sat \tlX\lnot\phi'$ for any formula $\phi'$)
, or $|\pi| > 0$ and $\pi(1)$, the sequence beginning at the next state of $\pi$, satisfies $\phi$.  Put another way, for $\pi$ to satisfy $\tlWeakX \phi$ it can either be empty, or satisfy $\tlX \phi$.  The last observation highlights a difference between LTL and Finite LTL: $\tlX$, which is its own dual in LTL, does not have this property in Finite LTL, since $\varepsilon \not\sat \tlX \phi$ for all $\phi$.

The previous observations generally highlight subtleties in the semantics related to $\varepsilon$; for example, $\varepsilon \sat \neg a$ for any atomic proposition $a$, and the meanings of $\tlF \phi$ and $\tlG \phi$ can be non-intuitive depending on whether or not $\varepsilon \sat \phi$.  The paper~\cite{huang:arxiv2019} discusses the semantics in detail and also shows how issues relating to $\tlF$ and $\tlG$ in particular may be addressed via simple encodings.  It also gives a construction for building non-deterministic finite automata from formulas whose languages consist of the data streams satisfying the given formula.  We use key aspects of that translation in this paper, and so have elected to retain $\varepsilon$ as a sequence for interpreting Finite LTL formulas. 

\subsubsection{The propositional fragment of Finite LTL.}
This paper makes heavy use of $\Gamma^\AP$, the propositional fragment of Finite LTL, so we comment briefly on the properties of it here.  First, it is easy to see that for any propositional formula $\gamma \in \Gamma^\AP$ and non-empty data streams $\pi$ and $\pi'$ such that $\pi_0 = \pi'_0$, $\pi \sat \gamma$ iff $\pi' \sat \gamma$.  That is, only the first state in a non-empty data stream matters for determining whether or not the stream satisfies $\gamma$.  It is also shown in~\cite{huang:arxiv2019} that for any $\gamma \in \Gamma^\AP$, $\varepsilon \sat \gamma$ iff $\pi \sat \gamma$ for all non-empty $\pi$ such that $\pi_0 = \emptyset$ (i.e.\/ $\pi_0$ makes every atomic proposition false).  
This means that, despite the presence of $\varepsilon$ as a sequence, the propositional formulas $\Gamma^\AP$ enjoy the usual properties of propositional logic: deMorgan's Laws, distributivity, convertability into positive normal form and disjunctive normal form, etc.  We close this section by remarking on a complete lattice structure built on $\Gamma^\AP$ that is used heavily later.

\begin{definition}
Let $\gamma \in \Gamma^\AP$.  The \emph{equivalence class of $\gamma$ with respect to $\equiv$} is defined as
$$[\gamma]_\equiv = \{ \gamma' \in \Gamma^\AP \mid \gamma' \equiv \gamma \}.$$
We write 
$C_{\Gamma^\AP} = \{ [\gamma]_\equiv \mid \gamma \in \Gamma_\AP \}$
for the set of equivalence classes of $\Gamma_\AP$ and
extend $\land$ and $\lor$ and $\leq$ to $C_{\Gamma^\AP}$ as follows.
\begin{align*}
[\gamma_1]_\equiv \land [\gamma_2]_\equiv &= [\gamma_1 \land \gamma_2]_\equiv
\\
[\gamma_1]_\equiv \lor [\gamma_2]_\equiv &= [\gamma_1 \lor\gamma_2]_\equiv
\end{align*}
We also lift $\leq$ to $C_{\Gamma^\AP}$ in the obvious manner:  $[\gamma_1]_\equiv \leq [\gamma_2]_\equiv$ iff $\gamma_1 \leq \gamma_2$.
Finally, we define $[\gamma_1, \gamma_2] = \{ \gamma' \in \Gamma^{\AP} \mid \gamma_1 \leq \gamma' \leq \gamma_2\}$ to be the \emph{propositional interval} bounded below by $\gamma_1$ and above by $\gamma_2$, inclusive.
\end{definition}

\noindent
If $\AP$ is finite it follows that $C_{\Gamma^\AP}$ is finite, even though $\Gamma^\AP$ is not; indeed the number of equivalence classes in $C_{\Gamma^\AP}$ is doubly exponential in the size of $\AP$.  We  now state the following well-known result about $C_{\Gamma^\AP}$.

\begin{theorem}
Ordering $\leq$ induces a lattice on $C_{\Gamma^\AP}$, with $\land$ being the least upper bound (join) operator and $\lor$ being the greatest lower bound (meet) operator.
\end{theorem}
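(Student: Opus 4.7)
The plan is to verify the three partial-order axioms for $\leq$ on $C_{\Gamma^\AP}$, and then to check directly that $\land$ computes joins and $\lor$ computes meets under this ordering. First I would confirm that $\leq$, $\land$, and $\lor$ are well-defined on equivalence classes: since each is defined via the denotation map $\den{\cdot}$, replacing any representative $\gamma_i$ by an equivalent $\gamma_i'$ leaves $\den{\gamma_1 \land \gamma_2}$, $\den{\gamma_1 \lor \gamma_2}$, and the truth of $\gamma_1 \leq \gamma_2$ unchanged. This is routine from the definitions.

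Next I would establish that $\leq$ is a partial order. Reflexivity and transitivity follow immediately from the corresponding properties of set inclusion $\subseteq$ on the denotations. Antisymmetry is where passing to equivalence classes pays off: if $[\gamma_1]_\equiv \leq [\gamma_2]_\equiv$ and $[\gamma_2]_\equiv \leq [\gamma_1]_\equiv$, then $\den{\gamma_1} \supseteq \den{\gamma_2}$ and $\den{\gamma_2} \supseteq \den{\gamma_1}$, so $\gamma_1 \equiv \gamma_2$ and hence $[\gamma_1]_\equiv = [\gamma_2]_\equiv$.

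For the join and meet claims, the core identities are $\den{\gamma_1 \land \gamma_2} = \den{\gamma_1} \cap \den{\gamma_2}$ and $\den{\gamma_1 \lor \gamma_2} = \den{\gamma_1} \cup \den{\gamma_2}$, both of which fall out by unfolding the semantic clause for $\land$ (and, for $\lor$, its derived definition together with the clause for $\neg$) over an arbitrary normalized data stream $\pi$, including $\pi = \varepsilon$. Because $\leq$ is reverse-inclusion on denotations, an intersection yields a set contained in each factor, so $[\gamma_1 \land \gamma_2]_\equiv$ is an upper bound for both $[\gamma_1]_\equiv$ and $[\gamma_2]_\equiv$; and for any other upper bound $[\gamma]_\equiv$, the inclusions $\den{\gamma} \subseteq \den{\gamma_i}$ for $i = 1,2$ force $\den{\gamma} \subseteq \den{\gamma_1} \cap \den{\gamma_2} = \den{\gamma_1 \land \gamma_2}$, which is exactly $[\gamma_1 \land \gamma_2]_\equiv \leq [\gamma]_\equiv$. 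The $\lor$ case is dual.

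There is no substantive obstacle: the statement is essentially that the Lindenbaum--Tarski algebra of the propositional fragment forms a (bounded distributive) lattice, a standard fact. The only care needed is bookkeeping around the ordering convention: since $\leq$ is defined so that \emph{weaker} formulas are \emph{smaller}, the usual roles of $\land$ and $\lor$ as meet and join are swapped, which accounts for the wording of the theorem. The presence of $\varepsilon$ as a possible model does not disrupt the argument, because, as remarked earlier in the section, $\Gamma^\AP$ still satisfies the standard identities of propositional logic.
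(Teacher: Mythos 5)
Your proof is correct and is exactly the standard Lindenbaum--Tarski argument; the paper itself states this theorem as a well-known fact and gives no proof, so your write-up simply supplies the routine verification the paper leaves implicit. You correctly handle the one point worth care --- the reversed ordering convention (weaker $=$ smaller), which is why $\land$ is the join rather than the meet --- and you rightly observe that the presence of $\varepsilon$ as a model does not disturb the Boolean identities $\den{\gamma_1 \land \gamma_2} = \den{\gamma_1} \cap \den{\gamma_2}$ and $\den{\gamma_1 \lor \gamma_2} = \den{\gamma_1} \cup \den{\gamma_2}$.
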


\noindent
It is easy to see that $[\logicfalse]_\equiv$ is the maximum element in the lattice, while the minimum one is $[\logictrue]_\equiv$.
In addition, $[\gamma_1]_\equiv \land [\gamma_2]_\equiv$ is the least upper bound of $[\gamma_1]_\equiv$ and $[\gamma_2]_\equiv$, while $[\gamma_1]_\equiv \lor [\gamma_2]_\equiv$ is the greatest lower bound of $[\gamma_1]_\equiv$ and $[\gamma_2]_\equiv$.

\subsection{Query Checking for Finite Data Streams}
In our work on LTL Query Checking \cite{huang:avocs2017}, we were interested in solving LTL \emph{queries} over Kripke structures.  In that setting a query is a LTL formula containing a missing propositional subformula, and the goal is to construct solutions for the missing subformula.  In this paper, we instead are interested in \emph{Finite} LTL queries and normalized data streams obtained by observing the behavior of the system in question.  This section defines this Finite LTL query-checking problem precisely and prove results used later in the paper.  
In what follows we restrict $\AP$ to be finite and non-empty.

Finite LTL queries correspond to Finite LTL formulas with a missing propositional subformula, which we denote $\var$.  It should be noted that $\var$ stands for an unknown \emph{propositional formula}; it is \emph{not} a (fresh) atomic proposition.  The syntax of queries is as follows:
$$
\phi := \var \mid a \in \AP
\mid \lnot \phi
\mid \phi_{1} \land \phi_{2}
\mid \tlX \phi
\mid \phi_{1} \tlU \phi_{2}
$$

In this paper we only consider the case of a single propositional unknown, although the definitions can naturally be extended to multiple such unknowns, as well as missing subformulas lifted to arbitrary Finite LTL formulas rather than only propositional formulas.  We often write $\phivar$ for an LTL query with unknown $\var$, and $\phi[\gamma]$ for the LTL formula obtained by replacing all occurrences of $\var$ by LTL propositional formula $\gamma$.  If $\gamma[\var]$ is a query containing no modalities then we call $\gamma[\var]$ a \emph{propositional query}.  We write $\Phi[\var]$ for the set of Finite LTL queries and $\Gamma[\var] \subsetneq \Phi[\var]$ for the set of propositional queries.  We also lift the notion of logical equivalence, $\equiv$, to LTL queries as follows:  $\phi_1[\var] \equiv \phi_2[\var]$ iff $\phi_1[\gamma] \equiv \phi_2[\gamma]$ for all $\gamma \in \Gamma$.


The query-checking problem $\qc(\Pi, \phivar)$ may be formulated as follows.

\begin{description}
  \item[Given:] Finite set $\Pi$ of normalized data streams, Finite LTL query $\phivar$
  \item[Compute:] All propositional $\gamma$ such that for all $\pi \in \Pi$, $\pi \sat \phi[\gamma]$
\end{description}

\noindent
If $\gamma$ is such that $\pi \sat \phi[\gamma]$ for all $\pi$ in $\Pi$, then we call $\gamma$ a \emph{solution} for $\Pi$ and $\phivar$, and in this case we say that $\phivar$ is solvable for $\Pi$.  Computing all solutions for $\qc(\Pi, \phivar)$ cannot be done explicitly, since the number of propositional formulas is infinite.  However, if $\AP$ is finite then we are able to give a finite representation of the solutions for $\qc(\Pi,\phivar)$ using equivalence classes of $\Gamma^\AP$.


As an example Finite LTL query, consider $\tlG \texttt{var}$.  A solution to this query would yield a formula that is invariant at every observation in every data stream in $\Pi$.  Another example of a finite LTL query $\phivar$ is $\tlG\left(\var \to \tlF \mathit{err}\right)$.  Assuming $\textit{err}$ is an atomic proposition representing the occurrence of an error condition, a solution  to this query would give conditions guaranteed to trigger a future system error.  Such information could be useful in subsequent root-cause analyses of why the error occurred.


\section{From Finite LTL Queries to Automata}\label{sec:ltltableau}

This section describes the basis for our query-checking approach for Finite LTL:  a mechanism for converting Finite LTL queries into automaton-based representations called \emph{finite query automata}.  The section following then gives algorithms for solving query automata for a given finite, non-empty set of normalized data streams.  Our method for generating query automata relies on the tableau-based approach in~\cite{huang:arxiv2019} for producing non-deterministic finite automata (NFAs) from Finite LTL formulas.  We first review the tableau-based approach for Finite LTL, then describe how it can be adapted to produce query automata. 

\subsection{From Finite LTL to NFAs}

The tableau construction in~\cite{huang:arxiv2019} works on Finite LTL formulas in \emph{positive normal form} (PNF).  A formula is in PNF if negation is only applied to atomic propositions; if Finite LTL is extended with the derived operators $\lor$, $\tlWeakX$ and $\tlR$ then any formula in this extended logic can be converted into PNF.  In what followswe assume Finite LTL is extended in this fashion.  We recall definitions associated with non-deterministic automata below.

\begin{definition}\label{def:nfa}
A \emph{non-deterministic finite automaton} (NFA) is a tuple $(Q, \Sigma, q_I, \delta, F)$, where:
\begin{itemize}
\item
$Q$ is a finite set of \emph{states};
\item\label{nfa-i}
$\Sigma$ is a finite non-empty set of \emph{alphabet} symbols;
\item
$q_I \in Q$ is the \emph{start state};
\item
$\delta \subseteq Q \times \Sigma \times Q$ is the \emph{transition relation}; and
\item
$F \subseteq Q$ is the set of \emph{accepting states}.
\end{itemize}
\item\label{nfa-ii}
Let $M = (Q, \Sigma, q_I, \delta, F)$ be a NFA, let $q \in Q$, and let $w \in \Sigma^*$.  Then $q$ \emph{accepts $w$ in $M$} iff one of the following hold.
\begin{itemize}
\item
$w = \varepsilon$ and $q \in F$
\item
$w = \sigma w'$ for some $\sigma \in \Sigma, w' \in \Sigma^*$ and there exists $(q,\sigma,q') \in \delta$ such that $q'$ accepts $w'$ in $M$.
\end{itemize}
$L(M)$, the \emph{language} of $M$, is
$
L(M) = \{ w \in \Sigma^* \mid q_I \textnormal{ accepts } w \textnormal{ in } M\}.
$
\end{definition}

The following theorem is proven in~\cite{huang:arxiv2019}, among other places.

\begin{theorem}\label{thm:FLTL-to-PNFA}
Let $\phi$ be a Finite LTL formula.  Then there exists a NFA $M_\phi = (Q, 2^\AP, q_I, \delta, F)$ such that $\den{\phi} = L(M_\phi)$.
\end{theorem}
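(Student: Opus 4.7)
The plan is to construct $M_\phi$ via a tableau-based translation, adapting the classical Vardi--Wolper LTL-to-automaton construction to the finite-trace semantics. First convert $\phi$ to positive normal form using the dualities for $\lor$, $\tlWeakX$, and $\tlR$, so that negation applies only to atomic propositions. Take $Q$ to be a subset of the powerset of the Fischer--Ladner style closure of $\phi$ (all subformulas together with their one-step unfoldings); each state $s$ is interpreted semantically as the conjunction $\bigwedge_{\psi\in s}\psi$ of pending obligations on the remaining suffix. Set $q_I = \{\phi\}$, and identify normalized data streams $\pi = A_0 \cdots A_{n-1}$ with words in $(2^\AP)^*$.

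Build the transition relation $\delta$ by tableau expansion. Using the equivalences
\[
\phi_1 \tlU \phi_2 \equiv \phi_2 \lor \bigl(\phi_1 \land \tlX(\phi_1 \tlU \phi_2)\bigr), \qquad
\phi_1 \tlR \phi_2 \equiv \phi_2 \land \bigl(\phi_1 \lor \tlWeakX(\phi_1 \tlR \phi_2)\bigr),
\]
together with the propositional rules for $\land$ and $\lor$, rewrite $\bigwedge s$ into an equivalent disjunction whose disjuncts have the form $\alpha \land \tlX \beta_{\mathrm{s}} \land \tlWeakX \beta_{\mathrm{w}}$, where $\alpha \in \Gamma^\AP$ is purely propositional and $\beta_{\mathrm{s}}, \beta_{\mathrm{w}}$ are conjunctions of closure formulas. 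For each such disjunct and each $A \subseteq \AP$ with $A \sat \alpha$, add $(s, A, s') \in \delta$ where $s'$ is the set of conjuncts of $\beta_{\mathrm{s}} \land \beta_{\mathrm{w}}$. Declare $s \in F$ exactly when $\varepsilon \sat \bigwedge_{\psi \in s}\psi$; by the Finite LTL semantics this is a syntactic check that rules out any atomic $a$ or strong-next $\tlX \psi'$ in $s$, tolerates any $\tlWeakX \psi'$, and descends recursively through $\tlU$ and $\tlR$ (since $\varepsilon \sat \phi_1 \tlU \phi_2$ iff $\varepsilon \sat \phi_2$, and similarly for $\tlR$).

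Correctness is by induction on $|\pi|$ of the stronger invariant that, for every state $s \in Q$ and every normalized data stream $\pi$, $s$ accepts $\pi$ in $M_\phi$ iff $\pi \sat \bigwedge_{\psi \in s}\psi$. The base case $\pi = \varepsilon$ holds by the definition of $F$. The inductive step uses soundness and completeness of the tableau expansion: a tableau disjunct $\alpha \land \tlX \beta_{\mathrm{s}} \land \tlWeakX \beta_{\mathrm{w}}$ is satisfied by $A_0 \pi'$ iff $A_0 \sat \alpha$ and $\pi' \sat \beta_{\mathrm{s}} \land \beta_{\mathrm{w}}$, which matches exactly the edge condition and the inductive hypothesis applied at $s'$. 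Specializing to $s = \{\phi\}$ gives $\den{\phi} = L(M_\phi)$.

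The main obstacle is book-keeping for the $\tlX$-versus-$\tlWeakX$ asymmetry, because in the finite setting $\tlX$ is not self-dual: a strong-next obligation carried at the end of a stream must block acceptance, while a weak-next obligation must be discharged freely by $\varepsilon$. This forces care at two places simultaneously, namely the $\tlR$ expansion rule (which must carry a $\tlWeakX$ continuation rather than a $\tlX$ one, dually to $\tlU$) and the characterization of $F$ (which must reject states containing any $\tlX \psi'$ but accept states containing only $\tlWeakX \psi'$); getting either one wrong causes the automaton to either lose streams of length exactly the depth of a nested next, or to accept truncated streams that should have remaining strong obligations.
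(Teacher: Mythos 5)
Your construction is essentially the same tableau-based approach the paper uses (and defers to its cited reference for): states as sets of closure subformulas interpreted conjunctively, start state $\{\phi\}$, acceptance by $\varepsilon$-satisfaction of $\bigwedge q$, transitions via the until/release unfolding equivalences, and correctness via the invariant that $q$ accepts $\pi$ iff $\pi \sat \bigwedge q$, proved by induction on $|\pi|$. Your explicit attention to the $\tlX$/$\tlWeakX$ asymmetry in the $\tlR$ expansion and in the accepting-state check matches the paper's treatment of why $\tlX$ is not self-dual over finite streams.
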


\noindent
This theorem asserts that for any $\pi$, $\pi$ satisfies $\phi$ iff $\pi \in L(M_\phi)$, meaning that $M_\phi$ is a NFA-based characterization of the meaning of $\phi$.
The proof in~\cite{huang:arxiv2019} for the theorm uses a tableau-based construction~\cite{1985-WolperWOLTTM} to produce $M_\phi = (Q, 2^\AP, q_I, \delta, F)$ from $\phi$.  Since we rely on this construction, and the specific connections it makes between states in $M_\phi$ and subformulas of $\phi$, we give a brief overview of it.
We begin by noting that the alphabet of $M_\phi$ is defined to be $2^\AP$: each alphabet symbol corresponds to a subset of $\AP$.  It therefore follows that $L(M_\phi)$ consists of sequences of subsets of $\AP$, which are just normalized data steams. The state set $Q_\phi$ associates a unique set of subformulas of $\phi$ to each state; formally, $Q_\phi = 2^{S(\phi)}$, where $S(\phi)$ is the set of all subformulas of $\phi$.  Intuitively, the tableau construction guarantees that data streams processed starting from state $q$ in $M_\phi$ will be guaranteed to satisfy each formula associated with $q$.  We sometimes write $\bigwedge q$ for the formula obtained by  onjoining all the formulas associated with $q$, with $\bigwedge \emptyset$ taken to be $\mathit{true}$.  The start state $q_{I,\phi}$ is defined to be $\{ \phi \}$, i.e\/ the singleton set containing only $\phi$.  $F_\phi$, the set of accepting states, consists of states $q$ with the property that each subformula $\phi'$in $q$ is satisfied by $\varepsilon$ (i.e.\/ $\varepsilon \sat \bigwedge q$).  The transition relation $\delta$ is defined so that the following both hold.
\begin{itemize}
\item
If $(q_1, A, q_2) \in \delta$ for states $q_1, q_2$ and $A \subseteq \AP$, and data stream $\pi$ is such that $\pi \sat \bigwedge q_2$, then $A\pi \sat \phi_1$ for each $\phi_1 \in q_1$ ($A\pi$ is the data stream obtained by prefixing $\pi$ with $A$).
\item
If $A\pi \sat \bigwedge q_1$ then there exists $q_2$ such that $(q_1, A, q_2)$ and $\pi \sat \bigwedge q_2$.
\end{itemize}
These facts guarantee the desired correspondence between $\phi$ and $M_\phi$; see~\cite{huang:arxiv2019} for details.

That paper also discusses a symbolic representation for $M_\phi$ in which transitions, rather than labeled with subsets of $\AP$, are instead labeled with elements of $\Gamma^\AP$, the propositional Finite LTL formulas.  This representation is central to what follows, so we formalize it here.

\begin{definition}\label{def:pnfa}
A \emph{propositional NFA} (PNFA) is a tuple $(Q, \AP, q_I, \delta, F)$, where:
\begin{itemize}
\item
$Q$, $q_I$ and $F$ are as in Definition~\ref{def:nfa};
\item\label{nfa-i}
$\AP$ is a finite non-empty set of atomic propositions;
\item
$\delta \subseteq Q \times \Gamma^\AP \times Q$ is the \emph{transition relation}.
\end{itemize}
Let $M = (Q, \AP, q_I, \delta, F)$ be a PNFA, let $q \in Q$, and let $w \in (\Sigma_\AP)^*$, where $\Sigma_\AP = 2^\AP$.  Then $q$ \emph{accepts $w$ in $M$} iff:
\begin{itemize}
\item
$w = \varepsilon$ and $q \in F$; or
\item
$w = Aw'$ for some $A \subseteq \AP, w' \in (2^\AP)^*$, and there exists $(q,\gamma,q') \in \delta$ such that $A \sat \gamma$ and $q'$ accepts $w'$ in $M$.\footnote{Recall that $A \in 2^\AP$ is also a singleton data stream, and thus $A \sat \gamma$ is defined.}
\end{itemize}
$L(M)$ is defined as in Definition~\ref{def:nfa}.
\end{definition}

In a PNFA, 
transitions are labeled by propositional formulas $\gamma$ constructed from $\AP$; such a PNFA is intended to accept sequences of alphabet symbols $A$, where each $A \subseteq \AP$ is a set of atomic propositions.  To process such a sequence the machine begins in its start state, then consumes each symbol $A$ by selecting transitions emanating from the current state and checking if $A$ satisfies the propositional formula labeling the transition.  If $A$ satisfies the label of the transition it may be taken, with $A$ being consumed and the next state being updated to the target state of the transition.  If it is possible to reach an accepting state after processing the last symbol then the machine accepts the sequence; otherwise, it does not.

\subsubsection{Emptiness checking for PNFAs.}
Later in this paper we will need to check whether $L(M) = \emptyset$ for PNFA $M = (Q, \AP, q_I, \delta, F)$.  This obviously holds iff no accepting state $q' \in F$ is reachable via a sequence of \emph{live} transitions from start state $q_I$, where a transition $(q, \gamma, q')$ is live iff $\gamma$ is satisfiable and \emph{dead} otherwise.  Computing emptiness of $L(M)$ can be solved using standard reachability techniques on a graph derived from $M$ as follows:  the nodes of the graph are $Q$, and there is an edge $q \to q'$ in the graph iff there is a live transition $(q, \gamma, q') \in \delta$.  Then $L(M) = \emptyset$ iff the $Q' \cap F = \emptyset$, where $Q'$ is the set of nodes in the graph reachable from $q_I$.  From a complexity-theoretic point of view the liveness check in the construction of this graph is the most expensive operation, although it can be avoided if the PNFA is constructed in a way that guarantees liveness of transition labels, as is the case for example in the construction given in~\cite{huang:arxiv2019}.

\subsection{From Finite LTL Queries to FQAs}

Our query-checking methodology relies on converting Finite LTL queries into \emph{Finite Query Automata} (FQAs).
A FQA is like a PNFA except that the FQA's transition labels are propositional queries instead of formulas, and the acceptance condition depends on the propositional unknown embedded in the queries.

\begin{definition}
Let $\var$ be proposition variable.  A \emph{Finite Query Automaton (FQA)} $M[\var]$ is a tuple $(Q, \AP, q_I, \delta[\var], F[\var])$, where:
  \begin{itemize}
    \item $Q$ is a finite set of states;
    \item $\AP$ is a finite, non-empty set set of atomic proposition;
    \item $q_{I} \in Q$ is the initial state;
    \item $\delta[\var] \subseteq Q \times \Gamma[\var] \times Q$ is the transition relation; 
    \item $F[\var] \in Q \rightarrow \Gamma[\var]$ is the \emph{acceptance condition}.
  \end{itemize}
If $\gamma \in \Gamma$ is a propositional formula then we write $M[\gamma]$, the \emph{instantiation} of $M[\var]$ with $\gamma$, for the PNFA $(Q, \AP, q_I, \delta[\gamma], F[\gamma])$, where
$$
\delta[\gamma] = \{ (q, \gamma'[\gamma], q') \mid (q, \gamma'[\var], q') \in \delta[\var] \}.
$$
and $F[\gamma] = \{ q \in Q \mid \varepsilon \models (F[\var](q))[\gamma] \}$.
\end{definition}

An FQA is intended to be the automaton analog of a Finite LTL query, where $\var$ is the unknown proposition to be solved for.  An instantiation of an FQA with $\gamma$ is then the PNFA obtained by replacing $\var$ by $\gamma$ in the transition labels and in the queries $F[\var]$ associates with each state $q$.  In the latter case, the resulting instantiated queries are used to determine if the associated state is accepting or not:  it is accepting iff the empty stream $\varepsilon$ satisfies its instantiated query.  
Note that $\gamma$ can have two effects on the language of $M[\gamma]$:  one via  the transition relation, as some transitions may become dead, and the other via the accepting / non-accepting status of states.

Our method for query-solving is automaton-theoretic; it is based on constructing a FQA $M_{\phivar}[\var]$ from a Finite LTL query $\phivar$.  Our method for computing $M_{\phivar}[\var]$ uses a modification of the tableau construction in~\cite{huang:arxiv2019}; we sketch the idea here.  It can be shown that any query $\phivar$ can be put into PNF, where $\neg$ can only be applied to atomic propositions or instances of $\var$.  
We may then define $M_{\phivar}[\var]$ to be $(Q, \AP, \delta[\var], \phivar, F[\var])$, where each $q \in Q$ is a query $\phi_q[\var]$ consisting of the conjunction of a set of subqueries of $\phivar$; each $(q, \gamma[\var], q') \in \delta[\var]$ is a propositional query based on the tableau construction, and
$F[\var](q)$ is a propositional query $\gamma_q[\var]$ with the property that for all $\gamma \in \Gamma$, $\varepsilon \models \phi_q[\gamma]$ iff $\varepsilon \models \gamma_q[\gamma]$.\footnote{That such a $\gamma_q[\var]$ exists is a consequence of the fact that, as shown in~\cite{huang:arxiv2019}, checking whether or not $\varepsilon \models \phi$ for $\phi \in \Phi$ can be computing via induction on $\phi$:  in effect, the modal operators can be ignored.}
We have the following.

\begin{theorem}\label{thm:FQA}
Let $\phivar$ be a PNF Finite LTL query, and let $\gamma$ be a propositional formula.  Then $L(M_{\phivar}[\gamma]) = \den{\phi[\gamma]}$.
\end{theorem}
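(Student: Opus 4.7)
The plan is to reduce the statement to Theorem~\ref{thm:FLTL-to-PNFA} by showing that the tableau-style construction of $M_{\phivar}[\var]$ commutes with substitution of $\var$ by $\gamma$. Concretely, the goal is to exhibit a correspondence between the instantiated FQA $M_{\phivar}[\gamma]$ and the tableau NFA $M_{\phi[\gamma]}$ of Theorem~\ref{thm:FLTL-to-PNFA} that preserves transitions and accepting states, so that the two machines accept the same language and the theorem of interest follows immediately.

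First I would fix PNF queries $\phivar$ and set up the correspondence at the state level. Each state $q$ of $M_{\phivar}[\var]$ carries an associated query $\phi_q[\var]$ built from a set of subqueries of $\phivar$; its instantiation $\phi_q[\gamma]$ is a conjunction of subformulas of $\phi[\gamma]$, which is exactly the kind of object the tableau construction from~\cite{huang:arxiv2019} uses as a state of $M_{\phi[\gamma]}$. Since the construction decomposes a state purely along its modal structure and treats $\var$ as an opaque propositional formula, substituting $\gamma$ for $\var$ at the end yields the same decomposition as applying the construction directly to $\phi[\gamma]$. This gives a natural map $\Psi$ from states of $M_{\phivar}[\gamma]$ to states of $M_{\phi[\gamma]}$.

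Next I would verify that $\Psi$ preserves the transition and acceptance structure. For transitions, each $(q, \gamma'[\var], q') \in \delta[\var]$ arises from a tableau rule that determines the immediate propositional obligation $\gamma'[\var]$ and the next-state obligation $\phi_{q'}[\var]$; after substitution these become $\gamma'[\gamma]$ and $\phi_{q'}[\gamma]$, which are precisely the labels and targets the tableau for $\phi[\gamma]$ would produce. For acceptance, I would invoke the footnote in the construction: the propositional query $\gamma_q[\var] = F[\var](q)$ satisfies $\varepsilon \sat \phi_q[\gamma]$ iff $\varepsilon \sat \gamma_q[\gamma]$ for every $\gamma$, so $q \in F[\gamma]$ under $\Psi$ exactly when the corresponding state of $M_{\phi[\gamma]}$ is accepting in the tableau sense. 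A straightforward induction on word length then yields $L(M_{\phivar}[\gamma]) = L(M_{\phi[\gamma]})$, and Theorem~\ref{thm:FLTL-to-PNFA} closes the argument.

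The main obstacle will be the bookkeeping required to make the commutation argument rigorous. Because PNF queries permit $\lnot \var$ as well as $\var$, one must check that both polarities of $\var$ flow correctly through the propositional unfolding steps of the tableau and that the resulting labels $\gamma'[\gamma]$ really coincide (up to propositional equivalence, which is all the PNFA semantics is sensitive to) with what the tableau for $\phi[\gamma]$ generates. Handling the until-operator rule, which introduces auxiliary subformulas such as $\tlX(\phi_1 \tlU \phi_2)$ into next-state states, is the place where this verification is most delicate, since the subformulas in question may themselves contain $\var$ and must be tracked consistently across the substitution.
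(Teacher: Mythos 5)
The paper itself offers no proof of this theorem: it is stated immediately after the sketch of the construction of $M_{\phivar}[\var]$, with the intended justification being that the query tableau is ``a modification of the tableau construction in~\cite{huang:arxiv2019}'' whose instantiation inherits the correctness of Theorem~\ref{thm:FLTL-to-PNFA}. Your overall strategy --- reduce to Theorem~\ref{thm:FLTL-to-PNFA} by relating $M_{\phivar}[\gamma]$ to $M_{\phi[\gamma]}$ --- is therefore a reasonable reconstruction of what the authors have in mind, and your handling of the acceptance condition via the property $\varepsilon \models \phi_q[\gamma]$ iff $\varepsilon \models \gamma_q[\gamma]$ is exactly right.

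However, the central step of your plan, the map $\Psi$ that ``preserves transitions and accepting states,'' is stronger than what actually holds, and this is a genuine gap. States of $M_{\phi[\gamma]}$ are subsets of $S(\phi[\gamma])$, the set of \emph{all} subformulas of $\phi[\gamma]$ --- including the internal subformulas of $\gamma$ itself, which the tableau rules will decompose (a disjunction inside $\gamma$ spawns distinct successor states, for instance). By contrast, states of $M_{\phivar}[\gamma]$ are instantiations $\phi_q[\gamma]$ of sets of subqueries of $\phivar$, in which $\gamma$ occurs only as an opaque block. The two automata therefore do not have matching state spaces or matching transition multiplicities, so no bijective, structure-preserving $\Psi$ exists; your hedge about labels agreeing ``up to propositional equivalence'' does not cover this state-level mismatch. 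There is a further wrinkle you partly anticipate: when $\neg\var$ occurs in $\phivar$, the instantiation contains $\neg\gamma$, which is not in PNF for non-atomic $\gamma$, so $M_{\phi[\gamma]}$ can only be built after a PNF conversion that perturbs the subformula structure again. To close the argument you should either (a) replace $\Psi$ by a language-preserving simulation relation (a bisimulation up to satisfiability-equivalence of labels), or (b) bypass $M_{\phi[\gamma]}$ entirely and prove directly, by induction on $|\pi|$, that a state $q$ accepts $\pi$ in $M_{\phivar}[\gamma]$ iff $\pi \models \phi_q[\gamma]$, using the two transition invariants the paper quotes for the tableau (which lift to the query setting because the construction treats $\var$ and $\neg\var$ as literals) together with the acceptance property. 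Option (b) mirrors how Theorem~\ref{thm:FLTL-to-PNFA} is itself proved and avoids the bookkeeping you correctly identify as delicate around the $\tlU$ rule.
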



\noindent
Later in this paper we are especially interested in propositions $\gamma$ such that $L(M[\gamma]) = \emptyset$.
\begin{definition}
Let $M[\var]$ be a FQA.  Then $\gamma$ is a \emph{shattering condition} for $M[\var]$ for if $L(M[\gamma]) = \emptyset$.  If a shattering condition $\gamma$ exists for $M[\var]$ we say that $M[\var]$ is \emph{shatterable}.
\end{definition}

\noindent
We close with a discussion of the acceptance condition of FQA $M[\var] = (Q, \AP, q_I, \delta[\var], F[\var])$.  Recall that $F[\var] \in Q \rightarrow \Gamma[\var]$; that is, $F[\var]$ maps each $q \in Q$ to a propositional query $\gamma_q[\var]$.  If $\gamma \in \Gamma$ is subsequently used to instantiate $\var$, then $q$ is accepting iff $\gamma_q[\gamma]$ is satisfied by $\varepsilon$.  The intuition behind this definition is that in $M[\var]$ propositional queries are used as transition labels, and thus govern when transitions may be taken in its instantiations, and as acceptance criteria.  In a PNFA, a string is accepted iff it is possible to process the entire string and reach an accepting state:  this means that the remainder of the string to process is empty when the accepting state is entered.  We adopt this same convention in the definition of an instantiation of an FQA; if the instantiated acceptance query for state is satisfied by $\varepsilon$, and a data stream has been fully processed, leaving the instantiation of the FQA in this state, then the stream should be accepted, and the state should be accepting.

Somewhat surprisingly, this definition implies that any instantiation $M[\gamma]$ of $M[\var]$ can only have one of two possible sets of accepting states.  To see why, define an equivalence relation $\sim_\varepsilon \,\subseteq \Gamma \times \Gamma$ as follows:  $\gamma_1 \sim_\varepsilon \gamma_2$ if it is the case that $\varepsilon \models \gamma_1$ iff $\varepsilon \models \gamma_2$.  It is easy to see that $\sim_\varepsilon$ induces two equivalence classes on $\Gamma$:  $[\mathit{true}]_{\sim_\varepsilon}$, consisting of $\gamma$ such that $\varepsilon \models \gamma$, and $[\mathit{false}]_{\sim_\varepsilon}$, consisting of $\gamma'$ such that $\varepsilon \not\models \gamma'$. These equivalence classes have the following interval characterization.

\begin{lemma}\label{lem:epsilon}
\mbox{}
\begin{enumerate}
\item $[\mathit{true}]_{\sim_\varepsilon} = [\mathit{true}, \bigwedge_{a \in AP} \neg a]$.
\item $[\mathit{false}]_{\sim_\varepsilon} = [\bigvee_{a \in \AP} a, \mathit{false}]$.
\end{enumerate}
\end{lemma}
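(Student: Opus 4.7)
The plan is to reduce each equality to an unfolding of $\leq$ combined with the key property of propositional Finite~LTL noted earlier in the excerpt: for any $\gamma \in \Gamma^{\AP}$, $\varepsilon \models \gamma$ holds iff $\pi \models \gamma$ holds for every non-empty $\pi$ with $\pi_0 = \emptyset$. Writing $\gamma^- = \bigwedge_{a \in \AP} \neg a$ and $\gamma^+ = \bigvee_{a \in \AP} a$, my first step is to compute their model sets directly from the Finite~LTL semantics:
\[
\den{\gamma^-} = \{\varepsilon\} \cup \{\pi : |\pi| \geq 1,\ \pi_0 = \emptyset\},
\qquad
\den{\gamma^+} = \{\pi : |\pi| \geq 1,\ \pi_0 \neq \emptyset\}.
\]
In particular $\varepsilon \models \gamma^-$ (each clause $\neg a$ holds on $\varepsilon$ because $\varepsilon \not\models a$) while $\varepsilon \not\models \gamma^+$, and this asymmetric status of $\varepsilon$ is precisely what forces the two intervals to use different lower and upper endpoints.

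For part~(1), $\mathit{true} \leq \gamma$ holds automatically because $\mathit{true}$ is the minimum of the propositional lattice (every stream models $\mathit{true}$). The content of the claim is therefore $\gamma \leq \gamma^-$ iff $\varepsilon \models \gamma$. I would prove the forward direction by unfolding $\gamma \leq \gamma^-$ to $\den{\gamma^-} \subseteq \den{\gamma}$ and observing that since $\varepsilon \in \den{\gamma^-}$ one immediately obtains $\varepsilon \models \gamma$. For the converse, assuming $\varepsilon \models \gamma$, I would show $\den{\gamma^-} \subseteq \den{\gamma}$ by a case split on $\pi \in \den{\gamma^-}$: if $\pi = \varepsilon$ the conclusion is the hypothesis, and if $\pi$ is non-empty with $\pi_0 = \emptyset$ the cited key property delivers $\pi \models \gamma$.

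Part~(2) proceeds by the dual argument. $\gamma \leq \mathit{false}$ holds automatically because $\den{\mathit{false}} = \emptyset$, so the remaining content is $\gamma^+ \leq \gamma$ iff $\varepsilon \not\models \gamma$. Unfolding, $\gamma^+ \leq \gamma$ means $\den{\gamma} \subseteq \den{\gamma^+}$, i.e., every model of $\gamma$ is non-empty with $\pi_0 \neq \emptyset$; equivalently $\gamma$ is satisfied by neither $\varepsilon$ nor any non-empty $\pi$ with $\pi_0 = \emptyset$. By the contrapositive of the key property these two conditions coincide, so the requirement collapses to $\varepsilon \not\models \gamma$. The main obstacle is really bookkeeping rather than difficulty: one must be explicit about which side $\varepsilon$ lies on in $\den{\gamma^-}$ and $\den{\gamma^+}$, because the Finite~LTL convention that $\varepsilon$ satisfies every negated atom is exactly what breaks the apparent symmetry between parts~(1) and~(2) and selects the correct endpoints.
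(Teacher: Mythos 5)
Your proposal is correct and follows essentially the same route as the paper: the paper's one-line proof simply cites that $\bigwedge_{a \in \AP} \neg a$ is the strongest propositional formula satisfied by $\varepsilon$ and $\bigvee_{a \in \AP} a$ the weakest not satisfied by $\varepsilon$, which are exactly the extremality facts you establish by unfolding $\leq$ and invoking the $\pi_0 = \emptyset$ property. Your version just supplies the semantic details the paper leaves implicit.
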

\begin{proof}
Follows from the facts that $\bigwedge_{a \in AP} \neg a$ is the strongest propositional formula satisfied by $\varepsilon$ and  that $\bigvee_{a \in \AP} a$ is the weakest propositional formula not satisfied by $\varepsilon$.
\end{proof}

\noindent
We now have the following.

\begin{lemma}\label{lem:eps}
Let $\gamma_1, \gamma_2 \in \Gamma$.
\begin{enumerate}
\item
If $\gamma_1 \equiv \gamma_2$ then $\gamma_1 \sim_\varepsilon \gamma_2$.
\item
If $\gamma_1 \sim_\varepsilon \gamma_2$ then for any Finite LTL query $\phi[\var]$, $\varepsilon \models \phi[\gamma_1]$ iff $\varepsilon \models \phi[\gamma_2]$.
\end{enumerate}
\end{lemma}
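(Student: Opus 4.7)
The plan is to handle the two parts in sequence, since part~(1) is essentially immediate from the definitions while part~(2) needs structural induction.

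For part~(1), I would just unfold definitions. Recall that $\gamma_1 \equiv \gamma_2$ means $\den{\gamma_1} = \den{\gamma_2}$. In particular, $\varepsilon \in \den{\gamma_1}$ iff $\varepsilon \in \den{\gamma_2}$, which is precisely the condition $\gamma_1 \sim_\varepsilon \gamma_2$. No induction or case analysis is required.

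For part~(2), I would proceed by structural induction on the query $\phi[\var]$, exploiting the observation (already invoked in the footnote preceding the lemma) that whether $\varepsilon \sat \psi$ holds depends only on the \emph{shape} of $\psi$ and the atomic-level data, so modal operators effectively collapse at $\varepsilon$. The base cases are: $\phi[\var] = \var$, where $\phi[\gamma_i] = \gamma_i$, so the claim is exactly the hypothesis $\gamma_1 \sim_\varepsilon \gamma_2$; and $\phi[\var] = a$ with $a \in \AP$, where $\phi[\gamma_1] = \phi[\gamma_2] = a$ and $\varepsilon \not\sat a$ by \defref{finite-ltl-semantics}. The Boolean cases $\neg\psi[\var]$ and $\psi_1[\var] \land \psi_2[\var]$ follow directly from the inductive hypotheses on their subqueries, since $\varepsilon$-satisfaction distributes over $\neg$ and $\land$. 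For $\tlX\psi[\var]$ the case is trivial because $\varepsilon \not\sat \tlX\psi'$ for any $\psi'$ (the semantics requires $|\pi| \geq 1$), so $\varepsilon \not\sat \tlX\psi[\gamma_1]$ and $\varepsilon \not\sat \tlX\psi[\gamma_2]$ both hold vacuously.

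The only case requiring a brief semantic calculation is the until case $\phi[\var] = \psi_1[\var] \tlU \psi_2[\var]$. Here I would unfold the semantics at $\pi = \varepsilon$: since $|\varepsilon| = 0$, the only admissible witness $j$ with $0 \leq j \leq |\varepsilon|$ is $j = 0$, and the universal quantifier over $k$ with $0 \leq k < 0$ is vacuously satisfied. Thus $\varepsilon \sat \psi_1[\gamma_i] \tlU \psi_2[\gamma_i]$ reduces to $\varepsilon \sat \psi_2[\gamma_i]$, and the inductive hypothesis applied to $\psi_2[\var]$ closes the case. If one prefers to work directly with the grammar extended by $\lor$, $\tlWeakX$, and $\tlR$, the same pattern applies: $\varepsilon \sat \tlWeakX\psi'$ holds vacuously for both instantiations, and $\varepsilon \sat \psi_1 \tlR \psi_2$ likewise reduces to an $\varepsilon$-condition on a single subquery. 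The main obstacle, to the extent there is one, is just making sure every modal case is reduced correctly at the empty stream; once the until case is pinned down, the rest is mechanical.
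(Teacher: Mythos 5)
Your proof is correct and follows essentially the same route as the paper: part~(1) is the same immediate unfolding of $\den{\cdot}$, and part~(2) carries out explicitly the structural induction that the paper's proof merely cites (namely, that $\varepsilon$-satisfaction is computable inductively on formula structure, with $\tlX$ failing at $\varepsilon$ and $\tlU$ collapsing to its right argument). The only difference is that you spell out the modal cases that the paper leaves implicit, which is fine.
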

\begin{proof}
  (1) is immediate.  (2) relies on the fact that determining if $\varepsilon \models \phi$ for Finite LTL formula $\phi$ can be computed inductively on the structure of $\phi$.
\end{proof}

\noindent
From this lemma, we observe that for any $\gamma \in \Gamma$,  $M[\gamma]$ can have only one of two possible sets of accepting states:  $F[\logictrue]$, when $\varepsilon \models \gamma$, or $F[\logicfalse]$, when $\varepsilon \not\models \gamma$.

\subsection{Composing PNFAs and FQAs}

We close this section by adapting the well-known language-intersection composition operation, $\otimes$, to PNFAs and FQAs.

\begin{definition}\label{def:composition}
Let $M_i$, $i \in \{1, 2\}$, be PNFAs $(Q_i, \AP, q_i, \delta_i, F_i)$.  Then $M_1 \otimes M_2$ is PNFA $(Q_1 \times Q_2, \AP, (q_1, q_2), \delta_{1,2}, F_1 \times F_2)$ where:
$$
\delta_{1,2} =
\{
((q_1', q_2'), \gamma_1 \land \gamma_2, (q_1'', q_2''))
\mid
(q_1', \gamma_1, q_1'') \in \delta_1 \textnormal{ and }
(q_2', \gamma_2, q_2'') \in \delta_2.
$$

\end{definition}

\noindent
Operation $\otimes$ can be extended to the case when one of the $M_i$ is a FQA in an obvious manner.  Without loss of generality assume $M_1$ is PNFA $(Q_1, \AP, q_1, \delta_1, F_1)$ and let $M_2[\var]$ be the FQA $(Q_2, \AP, q_2, \delta_2[\var], F_2[\var])$.  Then $(M_1 \otimes M_2)[\var]$ is the FQA $(Q_1 \times Q_2, \AP, (q_1, q_2), \delta_{1,2}[\var], F_{1,2}[\var]$), where $\delta_{1,2}[\var]$ is defined as $\delta_{1,2}$ in Definition~\ref{def:composition} and
$
F_{1,2}[\var](\gamma) = F_1 \times (F_2[\gamma]).
$
We have the following.

\begin{theorem}\label{thm:composition}
Let $M_1$ be a PNFA.
\begin{enumerate}
\item If $M_2$ is a PNFA then $L(M_1 \otimes M_2) = L(M_1) \cap L(M_2)$.
\item If $M_2[\var]$ is a FQA then for every $\gamma \in \Gamma$, $L((M_1 \otimes M_2)[\gamma]) = L(M_1) \cap L(M_2[\gamma])$.
\end{enumerate}
\end{theorem}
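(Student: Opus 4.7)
The plan is to treat this as the standard product-automaton construction, lifted to the setting where transitions carry propositional labels and (for the FQA case) the acceptance condition is parameterized by the unknown $\var$. The key enabling fact throughout is the Boolean observation that for any $A \subseteq \AP$ and $\gamma_1, \gamma_2 \in \Gamma^{\AP}$, $A \sat \gamma_1 \land \gamma_2$ iff $A \sat \gamma_1$ and $A \sat \gamma_2$, which is immediate from the semantics of $\land$ applied to the singleton data stream $A$.

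For part (1), I would prove the following stronger statement by induction on $|w|$: for every $q_1' \in Q_1$, $q_2' \in Q_2$, and $w \in (2^\AP)^*$, the state $(q_1', q_2')$ accepts $w$ in $M_1 \otimes M_2$ iff $q_1'$ accepts $w$ in $M_1$ and $q_2'$ accepts $w$ in $M_2$. Specializing to $(q_1', q_2') = (q_1, q_2)$ then yields $L(M_1 \otimes M_2) = L(M_1) \cap L(M_2)$. The base case $w = \varepsilon$ is just the definition $F_{1,2} = F_1 \times F_2$. For the inductive step $w = A w'$, unfolding \defref{def:composition} shows that a transition $((q_1', q_2'), \gamma_1 \land \gamma_2, (q_1'', q_2''))$ in $\delta_{1,2}$ exists with $A \sat \gamma_1 \land \gamma_2$ exactly when $(q_1', \gamma_1, q_1'') \in \delta_1$ with $A \sat \gamma_1$ and $(q_2', \gamma_2, q_2'') \in \delta_2$ with $A \sat \gamma_2$; applying the induction hypothesis to $w'$ from $(q_1'', q_2'')$ closes both directions.

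For part (2), I would first verify syntactically that $(M_1 \otimes M_2)[\gamma]$ and $M_1 \otimes (M_2[\gamma])$ denote the same PNFA. On transitions, substituting $\gamma$ for $\var$ in $\gamma_1 \land \gamma_2[\var]$ gives $\gamma_1 \land \gamma_2[\gamma]$, which matches the $\otimes$-product of $M_1$ with $M_2[\gamma]$ since $M_1$'s labels do not involve $\var$. On accepting states, the definition of $(M_1 \otimes M_2)[\var]$ specifies $F_{1,2}[\var](\gamma) = F_1 \times F_2[\gamma]$, so the instantiated acceptance set is exactly $F_1 \times F_2[\gamma]$, the accepting set of $M_1 \otimes (M_2[\gamma])$. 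Once this identity of PNFAs is established, part (1) applied to $M_1$ and $M_2[\gamma]$ yields $L((M_1 \otimes M_2)[\gamma]) = L(M_1 \otimes M_2[\gamma]) = L(M_1) \cap L(M_2[\gamma])$.

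The main obstacle is not a deep one; it is careful bookkeeping at the interface between the symbolic transition labels and the concrete alphabet $2^\AP$. In particular, I expect the only place where one has to slow down is verifying that instantiation commutes with the product on the acceptance condition, i.e.\ that $F_{1,2}[\gamma]$ as defined in the paragraph after \defref{def:composition} really coincides with $F_1 \times (F_2[\gamma])$ and not with some larger or smaller set; this is immediate from the definition given, but worth stating explicitly so that part (1) can be invoked as a black box.
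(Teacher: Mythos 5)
Your proof is correct, and it fills in exactly the standard argument the paper has in mind: the paper states this theorem without proof, offering only the informal gloss that point (1) is "the usual result" about product automata. Your induction on $|w|$ for part (1), together with the observation that instantiation by $\gamma$ commutes with $\otimes$ on both transition labels and the acceptance condition so that part (1) can be applied to $M_1$ and $M_2[\gamma]$, is precisely the intended (omitted) reasoning, and the one subtle point you flag --- that $F_{1,2}[\gamma]$ coincides with $F_1 \times (F_2[\gamma])$ --- is indeed immediate from the definition given after \defref{def:composition}.
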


\noindent Point (1) states the usual result that the language of the composed PNFA automata is equal to the intersection of the individual languages.  Point (2) establishes that if one of the automaton is instead a QFA, then for all propositional formulas $\gamma$, the language upon instantiating the composed PNFA with $\gamma$ is the same as the intersection of the language of the PNFA with the language of the QFA, instantiated by $\gamma$.

\section{Shattering FQAs}

The basis for our query-checking procedure is the computation of shattering conditions for specially constructed FQAs.  In this section we give a procedure for computing these conditions for general FQAs.  The algorithm relies on computing \emph{shattering intervals} for propositional queries $\gamma[\var]$.  We first show how this is done, then present our general FQA-shattering approach.

\subsection{Shattering Propositional Queries}

Our approach to shattering $M[\var]$ relies on selecting $\gamma$ so that some transitions in $M[\gamma]$ become dead because their labels, which are instantiated propositional queries, are unsatisfiable.
If the combination of acceptance sets and disabled transitions is such that no accepting state in $M[\gamma]$ is reachable, then $\gamma$ shatters $M[\var]$.
A key operation is the computation of all $\gamma' \in \Gamma$ for a given propositional query $\gamma[\var]$ such that $\gamma[\gamma']$ is unsatisfiable.
We call such a $\gamma'$ a \emph{shattering condition} for $\gamma[\var]$.
If $\gamma[\var]$ indeed has a shattering condition (it might not) we call $\gamma[\var]$ \emph{shatterable}.
In this section we show that the shattering conditions for shatterable $\gamma[\var]$ can be represented as a propositional interval $[\gamma_1, \gamma_2]$ and show how to compute this interval.
We start by considering special cases of $\gamma[\var]$.
We say that $\var$ is \emph{positive} in propositional query $\gamma[\var]$ iff every occurrence of $\var$ is within the scope of an even number of negations and \emph{negative} iff every occurrence is within the scope of an odd number of negations.  
These notions lead immediately to the following results.

\begin{theorem}\label{thm:shatterability}
  Let $\gamma[\var] \in \Gamma[\var]$ be a propositional query.
\begin{enumerate}
\item If $\var$ is positive in $\gamma[\var]$ then $\gamma[\var]$ is shatterable if and only if $\gamma[\logicfalse]$ is unsatisfiable.
\item If $\var$ is negative in $\gamma[\var]$ then $\gamma[\var]$ is shatterable if and only if $\gamma[\logictrue]$ is unsatisfiable.
\end{enumerate}
\end{theorem}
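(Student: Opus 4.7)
The plan is to reduce both parts to a monotonicity argument in the lattice $C_{\Gamma^\AP}$. The guiding observation is that a propositional formula $\gamma'$ is unsatisfiable precisely when $\gamma' \equiv \logicfalse$, i.e.\ when $[\gamma']_\equiv$ is the maximum element of the lattice; recall also that every $\gamma'' \in \Gamma$ satisfies $\logictrue \leq \gamma'' \leq \logicfalse$. So shatterability of $\gamma[\var]$ asks whether the substitution map $\gamma' \mapsto \gamma[\gamma']$ can reach the top of the lattice for some choice of $\gamma'$.

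The technical heart of the argument is a monotonicity lemma established by structural induction on $\gamma[\var]$: if $\var$ is positive in $\gamma[\var]$ then the map $\gamma' \mapsto \gamma[\gamma']$ is monotone with respect to $\leq$, and if $\var$ is negative then it is antitone. The base cases $\gamma[\var] = a$ for some atomic $a \neq \var$ (the map is constant, hence both) and $\gamma[\var] = \var$ (the map is the identity, hence monotone) are immediate. The inductive step for $\neg$ uses that negation reverses $\leq$ while simultaneously flipping the parity of every occurrence of $\var$, swapping monotone with antitone; the step for $\land$ uses that a conjunction of two monotone (respectively antitone) maps on $C_{\Gamma^\AP}$ is again monotone (respectively antitone). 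Derived connectives such as $\lor$ can be unfolded through the abbreviations given earlier and thereby absorbed into the same cases.

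Granted the lemma, both parts follow quickly. For (1), suppose $\var$ is positive in $\gamma[\var]$. The ``if'' direction is immediate: whenever $\gamma[\logicfalse]$ is unsatisfiable, $\logicfalse$ itself witnesses shatterability. For the converse, suppose some $\gamma' \in \Gamma$ has $\gamma[\gamma']$ unsatisfiable, i.e.\ $\gamma[\gamma'] \equiv \logicfalse$. Since $\gamma' \leq \logicfalse$ and the substitution map is monotone, $\gamma[\gamma'] \leq \gamma[\logicfalse]$, which unpacks to $\den{\gamma[\logicfalse]} \subseteq \den{\gamma[\gamma']} = \emptyset$; hence $\gamma[\logicfalse]$ is unsatisfiable. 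Part (2) is entirely dual: antitonicity together with $\logictrue \leq \gamma'$ gives $\gamma[\gamma'] \leq \gamma[\logictrue]$, and the same emptiness chase yields unsatisfiability of $\gamma[\logictrue]$.

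The main obstacle I anticipate is purely bookkeeping in the monotonicity lemma: I need to phrase the induction hypothesis so that the polarity assumption on $\var$ in $\gamma[\var]$ correctly propagates to each subformula (in particular flipping through $\neg$ and distributing through $\land$), and to verify that the substitution map descends to a well-defined map on the equivalence classes $C_{\Gamma^\AP}$ — both routine but easy to botch if one is not careful about the parity accounting.
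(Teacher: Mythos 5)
Your proposal is correct and follows essentially the same route as the paper: the ``if'' direction by taking $\logicfalse$ (resp.\ $\logictrue$) as the witnessing shattering condition, and the ``only if'' direction via monotonicity (resp.\ antitonicity) of the substitution map $\gamma' \mapsto \gamma[\gamma']$ together with $\gamma' \leq \logicfalse$ (resp.\ $\logictrue \leq \gamma'$). The only difference is that you sketch the structural induction establishing the monotonicity fact, which the paper simply asserts.
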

\begin{proof}
  For case 1, first assume that $\var$ is positive and $\gamma[\logicfalse]$ is unsatisfiable.
  Then, by definition, $\gamma[\var]$ is shatterable because $\logicfalse$ is a shattering condition for $\gamma[\var]$.
  Next, assume that $\var$ is positive and $\gamma[\var]$ is shatterable.
  This means that there is a $\gamma' \in \Gamma$ such that $\gamma[\gamma']$ is unsatisfiable.
  The proof follows from the fact that when $\var$ is positive in $\gamma[\var]$ and $\gamma_{1}' \le \gamma_{2}'$, then $\gamma[\gamma_{1}'] \le \gamma[\gamma_{2}']$.
  In particular, $\gamma' \le \logicfalse$ for all $\gamma' \in \Gamma$, so $\gamma[\gamma'] \le \gamma[\logicfalse]$.
  Thus, because $\gamma[\gamma']$ is unsatisfiable, $\gamma[\logicfalse]$ must also be unsatisfiable.
Case 2 is dual to case 1, and the proof follows similarly.
\end{proof}

\begin{theorem}\label{lem:pos-neg}
  Let $\gamma[\var] \in \Gamma[\var]$ be a shatterable propositional query.
  \begin{enumerate}
    \item\label{case1} If $\var$ is positive in $\gamma[\var]$, then there is a unique (modulo $\equiv$) weakest shattering formula $\gamma'$ for $\gamma[\var]$.
    \item\label{case2} If $\var$ is negative in $\gamma[\var]$, then there is a unique (modulo $\equiv$) strongest shattering formula $\gamma'$ for $\gamma[\var]$.
  \end{enumerate}
\end{theorem}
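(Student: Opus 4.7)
The plan is to give an explicit formula for the extremal shattering condition in each case, handling case 1 (positive $\var$) in detail and dualizing for case 2. My candidate for the unique weakest shattering condition is $\gamma_w = \lnot \gamma[\logictrue]$. The cleanest way to work with this is to view each propositional formula in $\Gamma^\AP$ semantically as a Boolean function on valuations $A \subseteq \AP$, identifying $\varepsilon$ with the empty valuation as justified earlier in the paper: writing $g(A, v)$ for the truth value of $\gamma[\var]$ at $A$ when $\var$ takes Boolean value $v \in \{0,1\}$, positivity of $\var$ makes $g(A, \cdot)$ monotone, so $g(A,0) \le g(A,1)$; and Theorem~\ref{thm:shatterability}(1) combined with shatterability gives $g(A, 0) = 0$ for every $A$.

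Two things then need checking. First, $\gamma_w$ is itself shattering: at any $A$, if $g(A,1) = 1$ then $\gamma_w(A) = 0$ and $\gamma[\gamma_w](A) = g(A,0) = 0$ by shatterability, while if $g(A,1) = 0$ then $\gamma_w(A) = 1$ and $\gamma[\gamma_w](A) = g(A,1) = 0$; either way the value is $0$, so $\gamma[\gamma_w] \equiv \logicfalse$. Second, any shattering $\gamma'$ satisfies $\gamma_w \le \gamma'$: for $A \models \gamma'$, shattering of $\gamma'$ forces $0 = g(A,\gamma'(A)) = g(A,1)$, hence $A \models \lnot \gamma[\logictrue] = \gamma_w$. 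Thus $\den{\gamma'} \subseteq \den{\gamma_w}$, which reads $\gamma_w \le \gamma'$, and uniqueness modulo $\equiv$ follows by antisymmetry of $\leq$ on equivalence classes.

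Case 2 is dual with the candidate $\gamma_s = \gamma[\logicfalse]$, using antitonicity of $g(A,\cdot)$ (so $g(A,1) \le g(A,0)$) and the Theorem~\ref{thm:shatterability}(2) fact $g(A,1) = 0$ for all $A$: shattering of $\gamma_s$ comes from a parallel two-case split on $g(A,0)$ in $\gamma[\gamma_s](A) = g(A, g(A,0))$, and strongest-ness comes from observing that if $\gamma'(A) = 0$ then shattering forces $g(A,0) = 0$, so $A \not\models \gamma_s$, giving $\den{\gamma_s} \subseteq \den{\gamma'}$. The only real technical hurdle I anticipate is setting up the semantic view of propositional queries cleanly enough that the two-case analyses, including the treatment of $\varepsilon$ as the empty valuation, are transparent; once that view is in place, both halves of each case are a short Boolean calculation.
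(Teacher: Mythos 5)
Your proof is correct, but it takes a genuinely different route from the paper's. The paper argues syntactically: it first rewrites a positive, shatterable $\gamma[\var]$ into an equivalent form $\var \land \gamma'$ for some $\gamma' \in \Gamma$ (in effect a special case of the SNF decomposition introduced just afterwards), and then exhibits $\lnot\gamma'$ as the weakest shattering formula, with the dual rewriting for the negative case. You instead work semantically, viewing $\gamma[\var]$ as a Boolean function $g(A,v)$ and producing the closed-form candidates $\lnot\gamma[\logictrue]$ and $\gamma[\logicfalse]$ directly; these coincide with the paper's answers up to $\equiv$, since under the paper's rewriting $\gamma[\logictrue] \equiv \logictrue \land \gamma' \equiv \gamma'$. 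Your case analyses are sound: positivity enters only through Theorem~\ref{thm:shatterability}(1), which gives $g(A,0)=0$ for every valuation, and the minimality half uses nothing beyond the definition of shattering, so your argument in fact establishes the slightly stronger statement that unsatisfiability of $\gamma[\logicfalse]$ (resp.\ $\gamma[\logictrue]$) alone suffices. What your approach buys is an explicit extremal shattering condition computable without any normal-form conversion; what the paper's buys is a decomposition that feeds directly into the SNF and interval machinery of Theorem~\ref{thm:propositional-shatterability}. The one step worth spelling out if you write this up is the compositionality identity $\gamma[\gamma'](A) = g(A,\gamma'(A))$ together with the identification of $\varepsilon$ with the valuation $\emptyset$, which you correctly flag as relying on the paper's earlier remarks about the propositional fragment $\Gamma^{\AP}$.
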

\begin{proof}
  Case~\ref{case1} follows from the fact that if $\var$ is positive in $\gamma[\var]$ and $\gamma[\var]$ is shatterable then $\gamma[\var]$ can be rewritten as $\var \land \gamma'$, where $\gamma[\var] \equiv \var \land \gamma'$ and $\gamma' \in \Gamma$ is a propositional formula.  It is straightforward to show that $\neg \gamma'$ shatters $\gamma[\var]$, and that if $\gamma''$ also shatters $\gamma[\var]$ then $\gamma'$ is stronger than $\neg \gamma'$ (i.e.\/ $\neg\gamma' \leq \gamma''$).  Case~\ref{case2} is dual; its proof is omitted.
\end{proof}

\noindent Characterizing the shatterability of general $\gamma[\var]$, in which $\var$ may appear both positively and negatively, is more complex and relies on the following.

\begin{definition}
Propositional query $\gamma[\var]$ is in \emph{shattering normal form (SNF)} iff it has form $\gamma_1 \lor (\var \land \gamma_2) \lor ((\neg \var) \land \gamma_3)$, where each $\gamma_i \in \Gamma$.
\end{definition}


\begin{lemma}\label{lem:snf}\mbox{}
For every $\gamma[\var] \in \Gamma[\var]$ there is a $\gamma'[\var] \in \Gamma[\var]$ in SNF such that $\gamma[\var] \equiv \gamma'[\var]$.
\end{lemma}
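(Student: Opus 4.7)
The plan is to apply a propositional Shannon expansion to $\gamma[\var]$ with respect to the placeholder $\var$, treating $\var$ as an ordinary Boolean variable alongside $\AP$. I would first form the two cofactors $\gamma[\logictrue]$ and $\gamma[\logicfalse]$, both of which lie in $\Gamma^\AP$ since they no longer mention $\var$. The candidate SNF witness is then
\[
\gamma'[\var] \;=\; \logicfalse \,\lor\, (\var \land \gamma[\logictrue]) \,\lor\, ((\neg \var) \land \gamma[\logicfalse]),
\]
which has SNF shape by inspection, with the three propositional slots filled by $\logicfalse$, $\gamma[\logictrue]$, and $\gamma[\logicfalse]$ respectively. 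That $\logicfalse$ is available in $\Gamma^\AP$ uses only the standing assumption that $\AP$ is nonempty.

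The substance of the proof is verifying $\gamma[\var] \equiv \gamma'[\var]$, which by definition means $\gamma[\gamma''] \equiv \gamma'[\gamma'']$ for every $\gamma'' \in \Gamma^\AP$. Fix such a $\gamma''$; both sides are then purely propositional Finite LTL formulas. The earlier observation about $\Gamma^\AP$---that the truth value of any such formula on a data stream $\pi$ depends only on a single underlying truth assignment (that given by $\pi_0$ when $\pi$ is nonempty, and the all-false assignment when $\pi = \varepsilon$)---reduces the check to verifying that $\gamma[\gamma'']$ and $\gamma'[\gamma'']$ agree under every truth assignment $v : \AP \to \{T, F\}$.

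Fix such a $v$ and extend it to $v^+$ on $\AP \cup \{\var\}$ by $v^+(\var) = v(\gamma'')$. A routine structural induction on $\gamma[\var]$ yields $v(\gamma[\gamma'']) = v^+(\gamma[\var])$, which is simply the fact that substitution and Boolean evaluation commute. A case split on $v(\gamma'')$ then finishes the argument by standard propositional Shannon expansion: when $v(\gamma'') = T$ both sides evaluate to $v(\gamma[\logictrue])$, and when $v(\gamma'') = F$ both sides evaluate to $v(\gamma[\logicfalse])$. I do not foresee a serious obstacle here; the content of the result is really that $\Gamma[\var]$ is, modulo $\equiv$, ordinary propositional logic over $\AP \cup \{\var\}$, where Shannon expansion on the extra variable is standard. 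The only minor subtlety worth flagging is the uniform treatment of $\pi = \varepsilon$, and this is already absorbed into the reduction to truth assignments.
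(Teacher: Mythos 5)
Your proof is correct, but it arrives at the SNF by a different construction than the paper. The paper treats $\var$ as an extra atomic proposition, converts $\gamma[\var]$ to disjunctive normal form, and then groups and factors the clauses according to whether they contain $\var$, $\neg\var$, or neither; the three groups become $\gamma_2$, $\gamma_3$, and $\gamma_1$ respectively. You instead produce an explicit closed-form witness via Shannon expansion, taking $\gamma_1 = \logicfalse$, $\gamma_2 = \gamma[\logictrue]$, $\gamma_3 = \gamma[\logicfalse]$, and verify the equivalence by reducing Finite LTL propositional semantics to truth assignments (correctly handling $\varepsilon$ as the all-false assignment, which the paper's discussion of $\Gamma^{\AP}$ licenses) and invoking the substitution lemma. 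Both arguments rest on the same observation --- that $\Gamma[\var]$ is, up to $\equiv$, classical propositional logic over $\AP \cup \{\var\}$ --- but they buy slightly different things. Your cofactor form is more immediately constructive and gives a canonical witness with $\gamma_1$ trivially unsatisfiable, which would simplify the shatterability test of Theorem~\ref{thm:propositional-shatterability} to the single condition $\neg\gamma[\logictrue] \leq \gamma[\logicfalse]$. The paper's DNF route is the one its later complexity accounting refers to (one DNF conversion per label), and in practice can yield smaller $\gamma_2,\gamma_3$ when $\var$ occurs in few clauses, whereas your cofactors each duplicate essentially the whole formula. Either construction proves the lemma.
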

\begin{proof}
Observe that if we view $\var$ as an atomic proposition then we can also treat $\gamma[\var]$ as a propositional formula and convert it into \emph{disjunctive normal form}
in such a way that every clause has either no occurrences of $\var$, or one occurrence of $\var$, or one occurrence $\neg \var$.  We finish building $\gamma'[\var]$ by grouping the clauses containing $\var$ and then factoring out $\var$, and similarly for $\neg\var$.
\end{proof}


\begin{theorem}\label{thm:propositional-shatterability}
Let $\gamma[\var] = \gamma_1 \lor (\var \land \gamma_2) \lor ((\neg \var) \land \gamma_3)$ be in SNF.
\begin{enumerate}
\item
$\gamma[\var]$ is shatterable iff $\gamma_1$ is unsatisfiable and $\neg\gamma_2 \leq\gamma_3$.
\item
If $\gamma[\var]$ is shatterable then $\gamma'$ shatters $\gamma[\var]$ iff $\neg\gamma_2 \leq \gamma' \leq \gamma_3$.
\end{enumerate}
\end{theorem}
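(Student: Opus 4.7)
The plan is to reduce both parts to a direct unsatisfiability analysis of $\gamma[\gamma']$. Since
$$\gamma[\gamma'] \;=\; \gamma_1 \lor (\gamma' \land \gamma_2) \lor ((\neg \gamma') \land \gamma_3)$$
is a disjunction, it is unsatisfiable if and only if each of its three disjuncts is individually unsatisfiable. This gives three simultaneous conditions: (i) $\gamma_1$ is unsatisfiable; (ii) $\gamma' \land \gamma_2$ is unsatisfiable; and (iii) $(\neg \gamma') \land \gamma_3$ is unsatisfiable. Note that (i) is purely a property of $\gamma_1$ and does not depend on the choice of $\gamma'$.

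Next I would translate (ii) and (iii) into lattice form. Using that $\psi \land \chi$ is unsatisfiable iff $\den{\psi} \cap \den{\chi} = \emptyset$ iff $\den{\psi} \subseteq \den{\neg \chi}$, and recalling from Definition~\ref{finite-ltl-semantics} that $\phi_1 \leq \phi_2$ means $\den{\phi_2} \subseteq \den{\phi_1}$, condition (ii) becomes $\neg \gamma_2 \leq \gamma'$. Applying the same observation to (iii) with $\psi := \neg \gamma'$ and $\chi := \gamma_3$, and simplifying $\neg\neg\gamma'$ to $\gamma'$, condition (iii) becomes $\gamma' \leq \gamma_3$. Combined, $\gamma'$ shatters $\gamma[\var]$ if and only if $\gamma_1$ is unsatisfiable and $\neg \gamma_2 \leq \gamma' \leq \gamma_3$. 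Part~(2) is immediate from this equivalence, since its hypothesis (shatterability of $\gamma[\var]$) already forces $\gamma_1$ unsatisfiable by part~(1), leaving only the interval condition on $\gamma'$.

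For part~(1), I would then observe that shatterability (the existence of some witness $\gamma'$) is equivalent to the conjunction of (a) the $\gamma'$-independent requirement that $\gamma_1$ be unsatisfiable, and (b) the propositional interval $[\neg \gamma_2, \gamma_3]$ being nonempty. By the lattice structure on $C_{\Gamma^\AP}$, (b) holds precisely when $\neg \gamma_2 \leq \gamma_3$ (one direction is immediate; the other is witnessed by taking $\gamma' := \neg \gamma_2$ or $\gamma' := \gamma_3$ itself). Combining (a) and (b) yields the stated characterization.

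The main subtlety to flag carefully in the write-up is the orientation of $\leq$: since $\phi_1 \leq \phi_2$ means $\den{\phi_2} \subseteq \den{\phi_1}$, being smaller in the lattice means being logically weaker, so the set-inclusion arrows obtained from the unsatisfiability arguments and the $\leq$ arrows in the conclusion point in opposite directions. Past that bit of bookkeeping, no real obstacle remains: Lemma~\ref{lem:snf} is not needed since $\gamma[\var]$ is given in SNF, and everything else is just propositional reasoning about when a three-way disjunction is unsatisfiable.
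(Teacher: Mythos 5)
Your proof is correct and follows essentially the same route as the paper's (very terse) argument: reduce shattering of $\gamma[\gamma']$ to unsatisfiability of each disjunct, observe that the $\gamma_1$ condition is independent of $\gamma'$, and characterize the remaining two conditions as the interval constraint $\neg\gamma_2 \leq \gamma' \leq \gamma_3$. You simply fill in the lattice-orientation bookkeeping that the paper leaves implicit, and you handle it correctly.
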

\begin{proof}
Follows from the definition of shatterability and that fact that if $\neg\gamma_2 \leq \gamma' \leq \gamma_3$ then $\gamma'$ shatters $(\var \land \gamma_2) \lor ((\neg \var) \land \gamma_3)$, and conversely.
\end{proof}

\noindent As a consequence of this theorem and Lemma~\ref{lem:snf}, we have that the set of shattering formulas for any shatterable propositional query $\gamma[\var]$ can be represented as a propositional interval $[\gamma_1, \gamma_2]$, which we refer to as the \emph{shattering interval} for $\gamma[\var]$.
Also note that when $\var$ is positive in $\gamma[\var]$, this interval has form $[\gamma', \mathit{false}]$, while if $\var$ is negative then the interval has form $[\mathit{true}, \gamma']$; here the $\gamma'$ are the shattering formulas guaranteed by Theorem~\ref{lem:pos-neg}.

We finally note a \emph{joint shattering interval}, which shatters each $\gamma_i[\var]$, can be computed for a finite set $\{ \gamma_1[\var], \ldots, \gamma_n[\var] \}$ of shatterable propositional queries, based on the following lemma.

\begin{lemma}[Based on \cite{ahlswede:ejc1996}]\label{lem:interval-conjunction}
Let $[\gamma_1', \gamma_1'']$ and $[\gamma_2', \gamma_2'']$ be shattering intervals for $\gamma_1[var]$ and $\gamma_2[\var]$, respectively.  Then $[\gamma_1' \land \gamma_2', \gamma_1'' \lor \gamma_2'']$ is the shattering interval for query $\gamma_1[\var] \land \gamma_2[\var]$.
\end{lemma}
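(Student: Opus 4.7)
The plan is to read the lemma, in light of the paragraph immediately above it, as a claim about the \emph{joint} shattering interval for the pair $\gamma_1[\var], \gamma_2[\var]$ — i.e., the set of $\gamma \in \Gamma$ that simultaneously shatters both queries — and to identify this set with the lattice-theoretic intersection $[\gamma_1', \gamma_1''] \cap [\gamma_2', \gamma_2'']$ of the two individual shattering intervals. By Theorem~\ref{thm:propositional-shatterability} and Lemma~\ref{lem:snf} we already know each individual shattering interval is well-defined, so the only remaining task is to compute the intersection.

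The argument is a direct calculation using the lattice structure established on $C_{\Gamma^\AP}$. First I would unfold membership: $\gamma$ lies in both intervals iff the four inequalities $\gamma_1' \le \gamma$, $\gamma_2' \le \gamma$, $\gamma \le \gamma_1''$, and $\gamma \le \gamma_2''$ all hold. The universal property of the join $\land$ collapses the first pair into the single inequality $\gamma_1' \land \gamma_2' \le \gamma$, and dually the universal property of the meet $\lor$ collapses the second pair into $\gamma \le \gamma_1'' \lor \gamma_2''$. Combining these gives $\gamma \in [\gamma_1' \land \gamma_2', \gamma_1'' \lor \gamma_2'']$, which is exactly the claimed interval.

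The main care point, rather than a genuine obstacle, is to track the direction of the lattice: because the paper orders formulas by $\phi_1 \le \phi_2$ iff $\den{\phi_2} \subseteq \den{\phi_1}$, with $\logictrue$ as the minimum and $\logicfalse$ as the maximum, $\land$ serves as the join and $\lor$ as the meet — the opposite of the convention one might instinctively adopt when reading implications. Once this orientation is fixed, the proof reduces to the well-known formula for the intersection of two intervals in any lattice, which is what the citation to \cite{ahlswede:ejc1996} is pointing to. A final remark worth noting is that the resulting interval is empty exactly when $\gamma_1' \land \gamma_2' \not\le \gamma_1'' \lor \gamma_2''$, correctly recording the case in which no single $\gamma$ can simultaneously shatter both queries; the same formula handles this boundary uniformly, so no additional case analysis is needed in the subsequent algorithmic use where the lemma is iterated across a finite set $\{\gamma_1[\var], \ldots, \gamma_n[\var]\}$ of shatterable queries.
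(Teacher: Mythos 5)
Your proposal is correct and follows essentially the same route as the paper: the paper's proof is the one-line observation that $[\gamma_1', \gamma_1''] \cap [\gamma_2', \gamma_2''] = [\gamma_1' \land \gamma_2', \gamma_1'' \lor \gamma_2'']$, together with the remark that an empty result means the conjunction is unshatterable, and you simply spell out that intersection identity via the universal properties of join and meet (with the correct, inverted orientation of $\leq$). Your reading of the statement as being about the \emph{joint} shattering set, guided by the preceding paragraph, is also the intended one and matches how the lemma is used in Algorithm~\ref{alg:ShatterFQA}.
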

\begin{proof}
Follows from the fact that $[\gamma_1', \gamma_1''] \cap [\gamma_2', \gamma_2''] = [\gamma_1' \land \gamma_2', \gamma_1'' \lor \gamma_2'']$.  Note that if this interval is empty then $\gamma_1[\var] \land \gamma_2[\var]$ is not shatterable.
\end{proof}

\noindent
In what follows we write $[\gamma_1', \gamma_1''] \land [\gamma_2', \gamma_2'']$ for $[\gamma_1' \land \gamma_2', \gamma_1'' \lor \gamma_2'']$, and $\bigwedge I = [\bigwedge \gamma_i', \bigvee \gamma_i'']$ when $I = \{[\gamma_1', \gamma_1''], \ldots, [\gamma_n', \gamma_n'']\}$ is a finite set of shattering intervals.

\subsection{Computing Shattering Conditions for FQAs}

We now describe an algorithm for computing a representation of the shattering conditions for FQA $M[\var]$.  The representation consists of a set of shattering intervals $\{ [\gamma_1, \gamma_1'], \ldots [\gamma_n, \gamma_n'] \}$ with the property that $\gamma$ shatters $M[\var]$ iff $\gamma \in [\gamma_i, \gamma_i']$ for some $1 \leq i \leq n$.

The algorithm works by collecting all transition labels in $M[\var]$, computing shattering conditions of each subset of the set of labels, then performing an emptiness check on the PNFAs (there may be two, since there are in general two possible sets of accepting states) obtained by removing the transitions whose labels fall in the considered set and treating other transitions as live.
For efficiency reasons, the algorithm also includes a preprocessing step, in which it computes the shattering interval of each transition in label $M[\var]$ and uses this information to eliminate transition labels, and transitions, that cannot affect subsequent emptiness checks.  Specifically, it replaces the labels of transitions that are unshatterable (i.e. have empty shattering intervals) by $\logictrue$, and it eliminates dead transitions (i.e.\/ transitions of form $(q, \gamma[\var], q')$ in which $\gamma[\var]$ has shattering interval $[\logictrue, \logicfalse]$) that are shattered by every propositional formula.  Pseudocode for the procedure may be found in Algorithm~\ref{alg:ShatterFQA}.

\begin{algorithm}[t]
  \SetKwInOut{Input}{Input}
  \SetKwInOut{Output}{Output}
  \SetKwBlock{Preprocessing}{Preprocessing step:}{} 
  \SetKwBlock{Main}{Main:}{} 
  \DontPrintSemicolon 

  \underline{\textbf{Algorithm} \textit{ShatterFQA}} $(M[\var])$\;
  \Input{FQA $M[\var] = (Q, \AP, q_I, \delta[\var], F[\var])$}
  \Output{Set $\mathit{SC}$ of shattering intervals}
  
  \Preprocessing{
    $\mathit{TL} := \{ \gamma[\var] \mid \exists q, q' \in Q .\, (q, \gamma[\var], q') \in \delta[\var] \}$
    	\tcp*{Transition labels}
    \lForEach{$\gamma[\var] \in \mathit{TL}$}{compute shattering interval $I_{\gamma[\var]}$ of $\gamma[\var]$}
    $\mathit{UL} := \{\gamma[\var] \in TL \mid I_{\gamma[\var]} = \emptyset \}$
    	\tcp*{Unshatterable labels}
    $\mathit{DL} := \{\gamma[\var] \in TL \mid I_{\gamma[\var]} = [\logictrue,\logicfalse]\}$
    	\tcp*{Dead labels}
    $\mathit{SL} := \mathit{TL} - (\mathit{UL} \cup \mathit{DL})$
    	\tcp*{Surviving labels}
    $\delta'[\var] := \{ (q, \gamma[\var], q') \in \delta[\var] \mid \gamma[\var] \in \mathit{SL} \}$\;
    $\delta'[\var] := \delta'[\var] \cup \{ (q, \logictrue, q') \mid \exists \gamma[\var] \in \mathit{UL}. \, (q, \gamma[\var], q') \in \delta[\var]\}$\;
    $M'[\var] := (Q, \AP, q_I, \delta'[\var], F[\var])$
    	\tcp*{$L(M'[\gamma]) = \emptyset$ iff $L(M[\gamma]) = \emptyset$}
  } 
  \Main{
    $\mathit{SC} := \emptyset$
    	\tcp*{Shattering conditions for $M[\var]$}
    \ForEach{$S \subseteq SL$}{
      $I_S := \bigwedge \{I_{\gamma[\var]} \mid \gamma[\var] \in S\}$\;\label{alg-line:I_S}
      $\mathit{PSC} := \emptyset$\;
      $\delta'' = \{ (q, \logictrue, q') \mid (q, \gamma[\var], q') \in \delta'[\var] \textnormal{ and } \gamma[\var] \not\in S\}$\;
      
      \uIf{$I_S \cap [\logictrue]_{\sim \varepsilon} \neq \emptyset$\label{alg-line:if-true}}{
        $M_t = (Q, \AP, q_I, \delta'', F[\logictrue])$
          \tcp*{PNFA with acc. set $F[\logictrue]$}
        \lIf{$L(M_t) = \emptyset$}{$\mathit{PSC} := \{ I_S \cap [\logictrue]_{\sim \varepsilon} \}$}
      } 
      
      \uIf{$I_S \cap [\logicfalse]_{\sim \varepsilon} \neq \emptyset$\label{alg-line:if-false}}{
        $M_f = (Q, \AP, q_I, \delta'', F[\logicfalse])$
          \tcp*{PNFA with acc. set $F[\logicfalse]$}
        \lIf{$L(M_f) = \emptyset$}{$\mathit{PSC} := \mathit{PSC} \cup \{ I_S \cap [\logictrue]_{\sim \varepsilon} \}$}
      } 
      
      \lIf(\tcp*[f]{Don't split $I_S$}){$|\mathit{PSC}| = 2$}{$\mathit{SC} := \mathit{SC} \cup \{I_S\}$}
      \lElse{$\mathit{SC} := \mathit{SC} \cup \mathit{PSC}$}
    } 
    \Return $(\mathit{S})$\;
  } 
\caption{Algorithm for shattering FQAs}\label{alg:ShatterFQA}

\end{algorithm}

The following establishes the correctness of this algorithm.

\begin{theorem}\label{thm:correctness}
Suppose $\mathit{SC} = \mathit{ShatterFQA}(M[\var])$.  Then
$\gamma \in \Gamma^\AP$ shatters $M[\var]$ iff $\gamma \in [\gamma', \gamma'']$ for some $[\gamma',\gamma''] \in SC$.
\end{theorem}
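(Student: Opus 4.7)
The plan is to prove the biconditional by reducing emptiness of $L(M[\gamma])$ to the combinatorial data the algorithm manipulates. The central observation, implicit in the discussion preceding Algorithm~\ref{alg:ShatterFQA}, is that $L(M[\gamma])=\emptyset$ is determined by (i) which transitions of $M[\gamma]$ are live, i.e.\ which labels $\gamma'[\var]$ have $\gamma'[\gamma]$ satisfiable, and (ii) which accepting set is in effect, namely $F[\logictrue]$ when $\varepsilon\models\gamma$ and $F[\logicfalse]$ otherwise (Lemma~\ref{lem:eps}). Since the emptiness check after Definition~\ref{def:pnfa} depends only on the live reachability graph, the specific label on any live transition can be replaced by $\logictrue$ without changing the answer.

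I would first argue that the preprocessing is sound, i.e.\ $L(M'[\gamma])=\emptyset$ iff $L(M[\gamma])=\emptyset$ for every $\gamma$. Labels in $\mathit{UL}$ have empty shattering interval, so by definition every $\gamma$ leaves the corresponding transition live; relabeling it by $\logictrue$ preserves liveness. Labels in $\mathit{DL}$ have interval $[\logictrue,\logicfalse]$, so every $\gamma$ kills the corresponding transition; deleting it preserves the live reachability graph. Thus the main loop may safely work with $M'[\var]$ in place of $M[\var]$.

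For soundness ($\Leftarrow$), suppose $\gamma\in[\gamma',\gamma'']\in\mathit{SC}$. Tracing back to the iteration that added this interval, there is a subset $S\subseteq\mathit{SL}$ with $[\gamma',\gamma'']\subseteq I_S$; hence $\gamma$ shatters every label in $S$, so every $S$-labeled transition of $M[\gamma]$ is dead. Depending on whether $\gamma$ lies in $[\logictrue]_{\sim\varepsilon}$ or $[\logicfalse]_{\sim\varepsilon}$ (and, under the ``$|\mathit{PSC}|=2$'' optimization, either slice works because $I_S$ is the union of the two), the corresponding PNFA $M_t$ or $M_f$ was found empty, with acceptance set matching $\gamma$'s $\varepsilon$-class. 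Since $M_t$ (resp.\ $M_f$) treats \emph{all} non-$S$ transitions as live, its live reachability graph contains that of $M[\gamma]$, and emptiness transfers.

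For completeness ($\Rightarrow$), suppose $L(M[\gamma])=\emptyset$, hence $L(M'[\gamma])=\emptyset$. Let $S^\star=\{\gamma'[\var]\in\mathit{SL}\mid\gamma\in I_{\gamma'[\var]}\}$ be the \emph{exact} set of surviving labels that $\gamma$ shatters. Then $\gamma\in I_{S^\star}$, and the live transitions of $M'[\gamma]$ are precisely the non-$S^\star$ transitions of $\delta'[\var]$, whose labels may be relabeled $\logictrue$ without affecting reachability. Consequently, the PNFA built for $S=S^\star$---namely $M_t$ if $\varepsilon\models\gamma$ and $M_f$ otherwise---has empty language, and the slice $I_{S^\star}\cap[\logictrue]_{\sim\varepsilon}$ or $I_{S^\star}\cap[\logicfalse]_{\sim\varepsilon}$ (or all of $I_{S^\star}$ under the optimization) is added to $\mathit{SC}$ and contains $\gamma$. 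The main obstacle I expect is the dual bookkeeping around the two acceptance sets, especially verifying the $|\mathit{PSC}|=2$ optimization, which follows from the identity $I_S=(I_S\cap[\logictrue]_{\sim\varepsilon})\cup(I_S\cap[\logicfalse]_{\sim\varepsilon})$ together with the soundness argument applied to whichever slice contains $\gamma$.
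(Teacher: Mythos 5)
Your proposal is correct and follows essentially the same three-part decomposition as the paper's (much terser) proof: soundness of the preprocessing, soundness of each emitted interval via the reachability-graph containment $M'[\gamma] \subseteq M_t$ (resp.\ $M_f$), and completeness by exhibiting a subset of $\mathit{SL}$ whose emptiness check succeeds for a given shattering $\gamma$. The only difference is cosmetic: where the paper argues completeness by enlarging an $S$ whose check fails to some $S' \supseteq S$ whose check succeeds, you go directly to the exact set $S^\star$ of labels shattered by $\gamma$ --- the same witness, reached in one step --- and you supply the acceptance-set bookkeeping (via Lemma~\ref{lem:eps}) and the $|\mathit{PSC}|=2$ case that the paper leaves implicit.
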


\begin{proof}
It can be seen that $M'[\var]$, as computed in Algorithm~\ref{alg:ShatterFQA}, has the property that for any $\gamma$, $L(M'[\gamma]) = \emptyset$ iff $L(M[\gamma]) = \emptyset$.  It is also the case that for any $S \subseteq \mathit{SL}$ in the algorithm, if $L(M_t) = \emptyset$ then $L(M'[\gamma]) = \emptyset$ for any $\gamma \in I_S \cap [\logictrue]_{\sim\epsilon}$, and similalrly for $M_f$ and $I_S \cap [\logicfalse_{\sim\epsilon}$.  Finally, we note that if $S, I_S$ and $\gamma \in I_S \cap [\logictrue]_{\sim\epsilon}$ are such that $L(M[\gamma]) = \emptyset$ but $L(M_t) \neq \emptyset$, then there is an $S' \supseteq S$ such that $\gamma \in I_{S'} \cap [\logictrue]_{\sim\epsilon}$ and $L(M_t) = \emptyset$.  A similar result holds for $M_f$, and therefore the theorem follows.
\end{proof}

We now analyze the complexity of Algorithm \textit{ShatterFQA} in terms of the number of conversions to DNF and propositional satisfiability checks required.  We first note that for a given propositional query $\gamma[\var]$, computing its shattering interval and shatterability requires one DNF conversion (to convert $\gamma[\var]$ into SNF) and two satisfiability checks; thus $O(1)$ DNF conversions / satisfiability checks suffice.  Thus, the preprocessing step of the algorithm requires $O(|\delta[\var]|)$ of these operations.  Now consider the \textbf{foreach} loop in the main processing step.  Each iteration requires two satisfiability checks to perform the tests in the first two \textbf{if} statements.  Thus the total number of such operations required is $O(2^{|\mathit{SL}|})$, and as $|\mathit{SL}| \leq |\delta[\var]|$, we get a complexity of $O(2^{|\delta[\var]|})$ DNF conversions and satisfiability checks.  (There are also $O(2^{|\delta[\var]|})$ PNFA-emptiness checks.)

This complexity is quite high, but we note that there are heuristics that can greatly reduce running time in practice.  First, we note that if $S \subseteq \mathit{SL}$ is such that $I_S$ shatters $M[\var]$, then so too does $I_{S'}$ for every $S \subseteq S'$.  This implies that a systematic enumeration of the subsets of $\mathit{SL}$ exploiting this fact can reduce computation time.  Secondly, if the constructions of $M_t$ and $M_f$ can exploit similar properties in order to enable incremental updates to the reachability analysis required for emptiness checking.  Finally, the structure of transition labels in $M[\var]$ can sometimes enable simple satisfiability checks; we explore this later.

\section{Solving $QC(\Pi, \phivar)$}
In this section we show how to solve the Finite LTL query-checking problem over normalized data streams using the FQA shattering framework developed in the previous section.  We consider both the single-stream and multiple-stream cases, and note how the structure of the specific shattering problems they induce enable significant optimizations.

\subsection{Query Checking Single Data Streams}

In the single-stream query-checking problem $QC1(\pi, \phivar) = QC(\{\pi\}, \phivar)$, we are given one normalized data stream $\pi$ and a Finite LTL query $\phivar$ and asked to compute all propositional formulas $\gamma \in \Gamma$ such that $\pi \models \phi[\gamma]$.
We can convert this problem into an FQA shattering problem as follows.
To begin with, it is obviously the case that $\pi \models \phi$ for any Finite LTL formula $\phi$ if and only if $\pi \not\models \lnot\phi$.
Now suppose $M_\pi$ is a PNFA such that $L(M_\pi) = \{\pi\}$.
It immediately follows that $\pi \models \phi$ if and only if $L(M_\pi) \cap L(M_{\lnot\phi}) = \emptyset$, where $M_{\lnot\phi}$ is a PNFA such that $\den{\lnot\phi} = L(M_{\lnot\phi})$ (see Theorem~\ref{thm:FLTL-to-PNFA}).  Theorem~\ref{thm:composition} ensures that this in turn holds if and only if $L(M_\pi \otimes M_{\lnot\phi}) = \emptyset$.  The same theorem, together with Theorem~\ref{thm:FQA}, then guarantees that for query $\phivar$ and a given propositional formula $\gamma$, $\pi \models \phi[\gamma]$ if and only if $L(M_\pi \otimes M_{\lnot(\phi[\gamma])}) = \emptyset$, or in other words, that $\gamma$ shatters FQA $(M_\pi \otimes M_{\lnot\phivar})[\var]$, where the construction of $M_{\lnot\phivar}$ is given in \secref{sec:ltltableau} immediately before Theorem~\ref{thm:FQA}.  Note that if if $\var$ is positive (negative) in $\phivar$, then $\var$ is negative (positive) in $M_{\lnot\phivar}[\var]$ and hence in $\left(M_{\pi} \otimes M_{\lnot\phivar}\right)[\var]$.

The above method requires constructing $M_{\pi}$; this is easily done as follows.  Let $\pi = A_0 \ldots A_{n-1}$, where each $A_i \subseteq \AP$.  Then $M_{\pi}$ consists of states ${q_0, \ldots, q_n}$, with each $q_i$ except $q_n$ having a single outgoing transition to $q_{i+1}$ labeled by a proposition $\dia{A_i} \in \Gamma^{\AP}$ derived from $A_i$ so that in a precise sense, $\dia{A_i}$ completely characterizes $A_i$.   Formally, if $A \subseteq \AP$ then let $\dia{A} = \bigwedge \{a : a \in A\} \land \bigwedge\{\lnot b : b \in \AP, b \not\in A\}$.   Note that for $A, A' \subseteq \AP$, $A' \models \dia{A}$ if and only if $A' = A$.
  $M_{\pi}$ has the form $(Q_\pi, \AP, q_{0}, \delta_{\pi}, \{q_{n}\})$, where $Q_{\pi} = \{q_{0}, \ldots, q_{n}\}$ and $\delta_{\pi} = \{(q_{i}, \dia{A_{i}}, q_{i+1}) : 0 \leq i < n\}$.

We may now give our method, also called $QC1(\pi, \phivar)$, for solving query problem $QC1(\pi, \phivar)$ as follows.
\begin{enumerate}
  \item Compute $M_{\pi}$
  \item Compute $M_{\lnot\phivar}$
  \item Compute the composed automaton $M_{c}[\var] = M_{\pi} \otimes M_{\lnot\phivar}$
  \item Return the result, $SC$, of invoking $\mathit{ShatterFQA(M_c[\var])}$
\end{enumerate}

\noindent
The next theorem establishes the correctness of $QC1$.

\begin{theorem}\label{thm:correctness-qc1}
Let $\pi$ be a normalized data stream, let $\phivar$ be a Finite LTL query, and let $SC$ be the set of shattering intervals returned by $QC1(\pi, \phivar)$.  Then for any propositional $\gamma$, $\pi \models \phi[\gamma]$ if and only if there exists $[\gamma_1, \gamma_2] \in SC$ such that $\gamma_1 \leq \gamma \leq \gamma_2$.
\end{theorem}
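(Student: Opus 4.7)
The plan is to reduce the theorem to a chain of equivalences that terminate in an application of Theorem~\ref{thm:correctness} (correctness of \textit{ShatterFQA}). The overall strategy mirrors the informal justification already sketched in the paragraph preceding the algorithm $QC1$: $\pi\models\phi[\gamma]$ is recast step by step as a language-emptiness condition, then as a shattering condition on the composed FQA, and finally as membership in some interval of $SC$.

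First I would verify the auxiliary claim that $L(M_\pi)=\{\pi\}$. Writing $\pi=A_0\ldots A_{n-1}$, the PNFA $M_\pi$ has a single path $q_0\to q_1\to\cdots\to q_n$ whose transitions are labeled by $\dia{A_0},\ldots,\dia{A_{n-1}}$. By the defining property of $\dia{A_i}$, a symbol $A'\subseteq\AP$ satisfies $\dia{A_i}$ if and only if $A'=A_i$; since $q_n$ is the only accepting state and there is no non-determinism, the accepted language is exactly $\{A_0\ldots A_{n-1}\}=\{\pi\}$. This is essentially routine but should be stated explicitly, since every later step depends on it.

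Next I would stitch the following chain of ``if and only if''s together. Fix a propositional formula $\gamma$. Then
\begin{align*}
\pi\models\phi[\gamma]
&\iff \pi\not\models\lnot\phi[\gamma] \\
&\iff \pi\notin\den{\lnot\phi[\gamma]} \\
&\iff \pi\notin L\bigl(M_{\lnot\phivar}[\gamma]\bigr) \\
&\iff L(M_\pi)\cap L\bigl(M_{\lnot\phivar}[\gamma]\bigr)=\emptyset \\
&\iff L\bigl((M_\pi\otimes M_{\lnot\phivar})[\gamma]\bigr)=\emptyset \\
&\iff \gamma \text{ shatters } M_c[\var].
\end{align*}
The first line is the semantic definition of negation (Definition~\ref{finite-ltl-semantics}); the third uses Theorem~\ref{thm:FQA} applied to the query $\lnot\phivar$ (noting that $\lnot\phivar$ can be put in PNF as described just before that theorem); the fourth uses $L(M_\pi)=\{\pi\}$; the fifth is Theorem~\ref{thm:composition}(2) applied to the PNFA $M_\pi$ and the FQA $M_{\lnot\phivar}[\var]$; and the last is the definition of a shattering condition for the FQA $M_c[\var]=(M_\pi\otimes M_{\lnot\phivar})[\var]$.

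Finally, I would invoke Theorem~\ref{thm:correctness} on $\mathit{ShatterFQA}(M_c[\var])$: the returned set $SC$ consists of exactly those intervals whose elements shatter $M_c[\var]$, so $\gamma$ shatters $M_c[\var]$ precisely when $\gamma\in[\gamma_1,\gamma_2]$ for some $[\gamma_1,\gamma_2]\in SC$. Combining this with the chain above yields the theorem. The only substantive step that requires any care is the invocation of Theorem~\ref{thm:composition}(2), since its hypothesis is stated for a PNFA composed with an FQA; I would note that $M_\pi$ is indeed a PNFA (its transition labels are propositional formulas over $\AP$) and that $M_{\lnot\phivar}[\var]$ is by construction an FQA, so the hypothesis is met. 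Everything else is a straightforward appeal to results already established, so I do not expect any real obstacle beyond carefully citing the correct preceding theorem at each step.
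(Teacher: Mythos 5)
Your proposal is correct and follows essentially the same route as the paper: the paper's proof simply cites "the discussion above," which is precisely the chain of equivalences you spell out (negation, $L(M_\pi)=\{\pi\}$, Theorem~\ref{thm:FQA}, Theorem~\ref{thm:composition}(2), the definition of shattering), capped by the correctness of \textit{ShatterFQA} from Theorem~\ref{thm:correctness}. Your version is just a more explicit write-up of the same argument.
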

\begin{proof}
Follows from the discussion above.  Note that $\gamma_1 \leq \gamma \leq \gamma_2$ for some $[\gamma_1, \gamma_2] \in SC$ if and only if $\gamma$ shatters $M_{c}[\var]$.
\end{proof}

\subsubsection{Optimizing $QC1$.}

From the definition of $\otimes$ (\defref{def:composition}) it follows that transition labels in the $M_c[\var]$ used in $QC1$ have form $\dia{A} \land \gamma'[\var]$ for some $A \subseteq \AP$ and propositional query $\gamma'[\var]$.  This fact enables an immediate simplification of the computation of shattering intervals for edge labels as perfomed in \textit{ShatterFQA}.  Define two propositional queries $\gamma_1[\var]$ and $\gamma_2[\var]$ to be \emph{logically equivalent} if and only if for every propositional formula $\gamma$, $\gamma_1[\gamma] \equiv \gamma_2[\gamma]$.  We have the following.

\begin{lemma}
If propositional query $\gamma[\var] = \dia{A} \land \gamma'[\var]$, then $\gamma[\var]$ is logically equivalent to one of the following:
  \begin{itemize}
    \item $\dia{A}$
    \item $\logicfalse$
    \item $\dia{A} \land \var$
    \item $\dia{A} \land \lnot\var$
  \end{itemize}
\end{lemma}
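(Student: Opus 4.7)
The plan is to exploit the fact that $\dia{A}$ is a complete characterization of the truth values of every atomic proposition in $\AP$: by definition, for any $A' \subseteq \AP$, $A' \models \dia{A}$ iff $A' = A$. So in any non-empty data stream $\pi$ satisfying $\dia{A}$, the truth value of each atomic proposition at $\pi_0$ is pinned down, and the only ``freedom'' remaining in $\gamma'[\var]$ after conjunction with $\dia{A}$ lies in how $\var$ is eventually instantiated.

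Concretely, I would proceed as follows. First, view $\gamma'[\var]$ as an ordinary Boolean formula over the variables $\AP \cup \{\var\}$ (treating $\var$ syntactically as an additional atomic proposition, which is sound because we are reasoning about equivalence modulo all propositional instantiations of $\var$). Second, substitute the constants determined by $\dia{A}$ into $\gamma'[\var]$: replace each $a \in A$ by $\logictrue$ and each $b \in \AP \setminus A$ by $\logicfalse$. Call the resulting formula $\hat\gamma'[\var]$; it is a propositional formula in the single variable $\var$, and by standard propositional reasoning $\dia{A} \land \gamma'[\var] \equiv \dia{A} \land \hat\gamma'[\var]$, where the equivalence holds under every substitution of a propositional formula for $\var$. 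Third, observe that any propositional formula in one variable is equivalent to exactly one of $\logictrue$, $\logicfalse$, $\var$, or $\lnot\var$ (the four Boolean functions of one variable). Case-splitting on which of the four forms $\hat\gamma'[\var]$ takes yields the four listed normal forms: $\dia{A} \land \logictrue \equiv \dia{A}$, $\dia{A} \land \logicfalse \equiv \logicfalse$, $\dia{A} \land \var$, and $\dia{A} \land \lnot\var$.

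The main subtlety, rather than any real obstacle, is justifying the substitution step carefully in light of the empty-stream semantics. Because we defined equivalence of propositional queries as $\gamma_1[\var] \equiv_q \gamma_2[\var]$ iff $\gamma_1[\gamma] \equiv \gamma_2[\gamma]$ for all $\gamma \in \Gamma^\AP$, and $\dia{A}$ is never satisfied by $\varepsilon$ (it requires every $a \in A$ to hold at $\pi_0$, so $|\pi| \geq 1$), the conjunction $\dia{A} \land \gamma'[\gamma]$ is already falsified by $\varepsilon$ regardless of $\gamma'$. Hence the reasoning reduces to the non-empty case, where each $\pi_0$ satisfying $\dia{A}$ must equal $A$ exactly, so the observation that all atomic propositions are pinned is unambiguous. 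The classification of one-variable Boolean functions then completes the argument.
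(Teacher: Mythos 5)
Your proof takes essentially the same approach as the paper's: treat $\dia{A}$ as a total truth assignment on $\AP$, substitute $\logictrue$ or $\logicfalse$ for each atomic proposition in $\gamma'[\var]$ accordingly, and observe that the residue is a one-variable propositional formula, hence equivalent to one of $\logictrue$, $\logicfalse$, $\var$, $\lnot\var$. One small correction to your side remark about the empty stream: it is not true that $\dia{A}$ is never satisfied by $\varepsilon$ --- when $A = \emptyset$, $\dia{\emptyset} = \bigwedge_{b \in \AP}\lnot b$ is satisfied by $\varepsilon$ (the paper uses exactly this when it writes $[\logictrue]_{\sim_\varepsilon} = [\logictrue, \dia{\emptyset}]$), so the reduction to the non-empty case does not apply there; that case is still harmless, though, because $\varepsilon$ agrees with any non-empty stream whose first state is $\emptyset$ on all propositional formulas, which is precisely the assignment your substitution uses when $A = \emptyset$.
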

\begin{proof}
$\dia{A}$ may be seen as a truth assignment to every $a \in \AP$ in the following sense:  $a$ is true if and only if $a \in A$.  We may then simplify $\gamma'[\var]$ as follows:  replace every atomic proposition $a$ in $\gamma'[\var]$ by the proposition $\logictrue$ if $a \in A$, and $\logicfalse$ otherwise, then simplify the resulting expression, which only contains propositional connectives, $\var$, $\logictrue$ and $\logicfalse$.   It is straightforward to prove that this results in one of the listed formulas.
\end{proof}

This fact simplifies testing edge-label shatterability and edge deadness, as required in \textit{ShatterFQA}; an edge is dead if and only if its label simplifies to $\logicfalse$, and is shatterable if and only if it simplifies to $\dia{A} \land \var$ or $\dia{A} \land \lnot\var$.  It also means that all shatterable edge labels in $M_{c}[\var]$ have shattering intervals of the form $[\lnot\dia{A}, \logicfalse]$ (if the label simplifies to $\dia{A} \land \var$) or $[\logictrue, \dia{A}]$ (if simplifies to $\dia{A} \land \lnot\var$).  Thus no conversions to shattering normal form (SNF) are needed on the composed edge labels.

We also remark on another optimization made possible by the restricted form of transition labels in $M_c$.  Recall that in \textit{ShatterFQA} the joint shattering condition $I_S$ for a set of transition labels is computed (line~\ref{alg-line:I_S} of Algorithm~\ref{alg:ShatterFQA}).  When the FQA being processed is $M_c$ we know that each of individual shattering conditions has form either $[\lnot\dia{A}, \logicfalse]$ or $[\logictrue, {\dia{B}}]$, where $A, B \subseteq \AP$.  Applying the conjunction operation as presented in Lemma~\ref{lem:interval-conjunction} to these types of shattering conditions yields an interval of the following form.
$$
I_S = [\lnot\dia{A_1} \land \cdots \land \lnot\dia{A_k}, \dia{B_1} \lor \cdots \lor \dia{B_\ell}]
$$
This composition interval may be represented computationally as two sets, $\mathcal{E} = \{A_1, \ldots, A_k\}$ and $\mathcal{F} = \{B_1, \ldots, B_\ell\}$, and that these sets have the property that for any $\gamma \in \Gamma$,
$
\gamma \in I_S
$
iff $\den{\gamma} \cap \mathcal{E} = \emptyset$ and $\mathcal{F} \subseteq \den{\gamma}$.  This implies a simple check for whether $\mathcal{I_S} = \emptyset$ holds:  $\mathcal{I_S} = \emptyset$ is true iff $\mathcal{E} \cap \mathcal{F} \neq \emptyset$.
Finally, Lemma~\ref{lem:epsilon} indicates that sets $[\logictrue]_{\sim_\varepsilon}$ and $[\logicfalse]_{\sim_\varepsilon}$ have representations matching the shattering conditions for transition labels in $M_c$.  Specifically, $[\logictrue]_{\sim_\varepsilon} = [\logictrue, \dia{\emptyset}]$, while $[\logicfalse]_{\sim_\varepsilon} = [\lnot\dia{\emptyset}, \logicfalse]$.  This fact can be used to implement efficiently the tests in lines~\ref{alg-line:if-true} and~\ref{alg-line:if-false} of Algorithm~\ref{alg:ShatterFQA}.  In particular, line~\ref{alg-line:if-true} requires checking if $I_S \cup [\logictrue]_{\sim_\varepsilon} = \emptyset$, where $I_S$ is the conjunction of shattering intervals for a set of transitions labels.  If the FQA in question is $M_c$ then $I_S$ can be represented as a pair $(\mathcal{E}, \mathcal{F})$ of subsets of $2^{\AP}$.  Testing non-emptiness of $I_S \cup [\logictrue]_{\sim_\varepsilon}$ when $I_S \neq \emptyset$ amounts to checking whether or not $\emptyset \in \mathcal{E}$; if it is, the non-emptiness check fails, while if it is not then it succeeds.  Likewise, $I_S \cap [\logictrue]_{\sim_\varepsilon}$ can be represented using the same pair-of-sets approach for $I_S$:  the representation becomes $(\mathcal{E}, \mathcal{F} \cup \{ \emptyset \})$.  Similar observations hold for the check in line~\ref{alg-line:if-false}.  These observations also enable empty intervals to be efficiently removed from $SC$.

\subsection{Query Checking Multiple Data Streams}

In this section we briefly discuss an iterative method for solving the query-checking problem in the case of multiple data streams.  Note that a propositional formula can be a solution to such a problem if and only if it is a solution for each of the single-stream problems formed by considering each data stream in isolation.  This suggests the following iterative strategy for the multi-stream case:  process each individual stream, one at a time, using the $QC1$ technique outlined above, then conjoin each interval in the result with a each interval in a running set of shattering intervals that are the solution for the previously considered streams.  Algorithm \ref{alg:multiqc} presents the approach.  Note that we have ``in-lined'' code for $QC1$ rather than calling it explicitly so that we can re-use the construction of $M_{\lnot\phivar}$ across the different data streams in $\Pi$.

\begin{algorithm}[t]
  \SetKwInOut{Input}{Input}
  \SetKwInOut{Output}{Output}
  \SetKwBlock{Preprocessing}{Preprocessing step:}{} 
  \SetKwBlock{Main}{Main:}{} 
  \DontPrintSemicolon 

  \underline{\textbf{Algorithm} \textit{QC} $(\Pi, \phivar)$}\;
  \Input{Set of $k$ data streams $\Pi = \{\pi_{i}\}$, Finite LTL Query $\phivar$}
  \Output{Set $\mathit{SC}$ of query solutions}
  
  \Preprocessing{
    $M_{\lnot\phivar}$ := FQA representation of $\lnot\phivar$ \tcp*{Tableau method}


  } 
  \Main{
    $SC := \{[\logictrue, \logicfalse]\}$

    \ForEach{$\pi \in \Pi$}{
      $M_{\pi} :=$ PNFA for $\pi$  \tcp*{Build automaton $M_\pi$}
      $M_{c}[\var] := M_{\lnot\phivar} \otimes M_{\pi}$ \tcp*{Compose FQA for $\lnot\phivar$ and $M_\pi$}
      $SC' := \mathit{ShatterFQA}(M_{c}[\var])$ \tcp*{Compute shattering intervals}
      
      $SC := \bigl\{ [\gamma_1, \gamma_1'] \land [\gamma_2, \gamma_2'] \mid [\gamma_1, \gamma_1'] \in SC,  [\gamma_2, \gamma_2'] \in SC'\bigr\}$\; \tcp*{Update aggregate solution}

      \uIf{$SC$ contains only empty intervals \label{qc-line:early-termination}}{
        \Return $\emptyset$
      }
    }

    \Return $SC$
  } 
  \caption{Algorithm for query checking multiple data streams}\label{alg:multiqc}

\end{algorithm}

The algorithm $QC$ can obviously benefit from the same optimizations that we discussed for $QC1$.  The algorithm can also perform better if an appropriate ordering is used for processing the data streams in $\Pi$; in particular, it may generally be advantageous to process shorter data streams first.  Finally, the test in line~\ref{qc-line:early-termination} can be simplified if the intersection operation used to update $SC$ in the previous line ensures that no empty intervals are present in $SC$, using the ideas discussed for optimizing $QC1$.  In this case, it suffices to check whether or not $SC = \emptyset$ in order to determine whether $QC$ can terminate early or not.

\section{Experimental Results}
We have implemented our query-checking algorithms in C/C++.
Our tool allows the user to specify a Finite LTL query $\phivar$ and finite set of data streams $\Pi$, and outputs the set of solutions for $\var$ in the form of a set of propositional intervals.
The Spot~\cite{duret.16.atva2} platform (v.2.8.1) is used to handle the parsing of Finite LTL formulas, with an extension to support the $\tlWeakX$ (Weak Next) operator.
Several Spot-provided automatic formula rewrites (such as $\tlX \logictrue \equiv \logictrue$) are based on standard LTL identities and do not hold for Finite LTL; these are disabled.
The SymPy Python symbolic-computing package~\cite{meurer:pcs2017} is used to manipulate propositional formulas.



We conducted a set of experiments with two goals in mind: (1) to establish that meaningful results are produced; and (2) to perform a ``stress test'' to study the system's performance limitations.
For these purposes we used a synthetic dataset from the Causality Workbench Benchmark~\cite{causality_dataset} representing time-series data for product sales and promotions.
In the dataset, daily sales volumes for 100 different (anonymized) products were reported over a three-year period (1065 days total).
Over the same time period, the daily status (active or inactive) was recorded for 1000 different promotional campaigns.
Each product can be influenced by up to 50 different promotions in the dataset, with each promotion having a varying degree of influence on each product is targeted at.
An underlying 1000x100 correlation (influence) matrix between promotions and products is provided as part of the dataset.  Additionally, each product has a defined seasonal baseline.

To adapt this dataset for query checking, we created the following 1100 atomic propositions.
\begin{align*}
  & prod_{i} && i \in \{1,\ldots, 100\}\\
  & promo_{i} && i \in \{1,\ldots, 1000\}
\end{align*}
Intuitively, at time point $t$, $prod_{i}$ is true if product $i$'s sale volume on day $t$ was reported to be greater than the volume on day $t-1$, and false otherwise.  Similarly, at time point $t$, $promo_{i}$ is true if and only if the promotion was active on day $t$.  This construction yields a single data stream $D$ of 1095 time points, with $|\AP| = 1100$.  We also subdivide $D$ temporally into quarters to create a set of data streams $\dquarter$ (12 streams total, one for each quarter of each of the three years).
Additionally, we perform this division at a monthly level to create $\dmonth$ (36 total streams, one per month over three years).
Notationally, for a subset $A \subseteq \AP$, we denote by $D[A]$ ($\dquarter[A]$, $\dmonth[A]$) the data stream(s) in the given sets restricted to the set of atomic propositions $A$.

Two distinct query classes were explored. The first class has the form
$$\Phi_{1}[\var] = \{\tlG(prod_{i} \to \tlF (\var \land \tlX \logictrue))\}$$

\noindent for $i \in \{1,\ldots, 100\}$.
A query in this class is intended to characterize the impact that an increase in product $i$'s sales will have on future sales of other products, as specified by solutions to $\var$.
These queries are insensitive to promotional activity.
The second class has the form
$$\Phi_{2}[\var] = \{\tlG((promo_{i} \land prod_{j}) \to \tlF (\var \land \tlX \logictrue))\}$$
\noindent for $i \in \{1, \ldots, 100\}$ and $j \in \{1, \ldots, 1000\}$.
An instance of such a query captures the joint effect that promotion $i$ and an increase in product $j$'s sales have on the future sales of other products (encoded as solutions to $\var$)
This query class could be useful when planning a promotional strategy; for example, it can be used in assessing what additional effects a promotion targeting a specific product might have on other products as well.

The remainder of this section gives more details of the experiments we conducted to address the two goals defined above.
All experimental trials were carried out on a machine with an Intel Core i5-6600K, 32 GB RAM and a 64-bit version of GNU/Linux. 

\subsection{Benchmark 1 -- Detecting Meaningful Query Solutions}
We first illustrate our methodology's ability to produce comprehensible (to humans) solutions over smalls sets of atomic propositions.
To this end, we selected five sets of three atomic propositions each, with each such set containing two product propositions and one promotion proposition.  The selection criterion was to use the the top five promotions, ranked by the combined influence of their two most positively influenced products, as determined by the influence matrix belonging to the dataset.  This set $\mathcal{T}$ is: 
\begin{itemize}
  \item $\{prod_{27}, prod_{95}, promo_{233}\}$,
  \item $\{prod_{19}, prod_{96}, promo_{748}\}$,
  \item $\{prod_{19}, prod_{99}, promo_{710}\}$,
  \item $\{prod_{33}, prod_{52}, promo_{348}\}$,
  \item $\{prod_{44}, prod_{85}, promo_{433}\}$.
\end{itemize}
Then, for each $T \in \mathcal{T}$, we computed $D[T]$, $\dquarter[T]$, and $\dmonth[T]$.
Four queries were constructed, using atomic propositions from $T$ to form two instances of query classes $\Phi_{1}[\var]$ and two instances of $\Phi_{2}[\var]$, or four in total.
For example, with $T = \{prod_{27}, prod_{95}, promo_{233}\}$ we had the following queries:
\begin{itemize}
  \item $\tlG(prod_{27} \to \tlF(\var \land \tlX \logictrue))$,
  \item $\tlG(prod_{95} \to \tlF(\var \land \tlX \logictrue))$,
  \item $\tlG((promo_{233} \land prod_{27}) \to \tlF(\var \land \tlX \logictrue))$,
  \item $\tlG((promo_{233} \land prod_{95}) \to \tlF(\var \land \tlX \logictrue))$.
\end{itemize}
$\qc1(D[T], \phi[\var])$, $\qc(\dquarter[T], \phi[\var])$, and $\qc(\dmonth[T], \phi[\var])$ were computed for each of these four queries.
Solving the individual queries for these configurations took under a second of wall time each.
Shattering solutions were found for each of the queries in the single-stream $\qc 1(D[T], \phi[\var])$ scenario.
With $T = \{prod_{19}, prod_{96}, promo_{748}\}$ and $\phi[\var] = \tlG(prod_{19} \to \tlF(\var \land \tlX \logictrue))$, for example, the solution of $\qc1(D[T], \phi[\var])$ was found to be
$$\{ [\logictrue, prod_{19} \land \lnot prod_{96}], [\logictrue, \lnot prod_{19} \land \lnot prod_{96}]\},$$
suggesting that the rise of sales for $prod_{19}$ leads to an eventual decline of sales of $prod_{96}$.  When considering a query of the second class instead, i.e. $\phi[\var] = \tlG((promo_{748} \land prod_{19}) \to \tlF(\var \land \tlX \logictrue))$, the solution was found to be
$$\{ [\logictrue, prod_{19} \land prod_{96} \land promo_{748}], [\logictrue, \lnot prod_{19} \land \lnot prod_{96} \land promo_{748}]\}.$$
This indicates that the promotion either has an eventual positive impact, or negative impact, on the eventual sales of both products.

Over-all, solutions to queries of class $\Phi_{2}$ were observed to be related to solutions of the corresponding query of class $\Phi_{1}$.
Similar shattering intervals were computed for queries $\phi_{1}[\var]$ and $\phi_{2}[\var]$ for a given $T$.

To account for the seasonal variation known to exist in the dataset, we also performed query checking for data streams grouped temporally: $\dquarter$ was subdivided into four subsets, one each for the winter, spring, summer and fall quarters of the three years in the data set.
We similarly divided $\dmonth$ is subdivided into 12 subsets as well, one for each month.
The solutions to the relevant query-checking problems were then computed using these new data-stream groupings.  
For sets
  $\{prod_{27}, prod_{95}, promo_{233}\}$,
  $\{prod_{19}, prod_{99}, promo_{710}\}$, and
  $\{prod_{33}, prod_{52}, promo_{348}\}$
it was observed that certain solutions for $\phi_{2}[\var]$ characterized only a portion of the corresponding solution for $\phi_{1}[\var]$: when $T = \{prod_{19}, prod_{99}, promo_{710}\}$, the solution to the query $\phi_{1}[\var] = \tlG(prod_{19} \to \tlF(\var \land \tlX \logictrue))$ when only considering streams from the 11th month of each year was found to be
$$\{[\logictrue, \lnot prod_{19} \land prod_{99}]\}$$
while the solution for $\phi_{2}[\var] = \tlG((promo_{710} \land prod_{19}) \to \tlF(\var \land \tlX \logictrue))$ was
$$\{[\logictrue, \lnot prod_{19} \land prod_{99} \land \lnot promo_{710}], 
    [\logictrue, prod_{19} \land prod_{99} \land promo_{710}]\}.$$
For the other listed sets in $\mathcal{T}$ this type of variation was noted for both the full data stream, quarterly, and monthly breakdowns.

\subsection{Benchmark 2 -- Performance Evaluation}
We then assessed the overall performance of our implementation using the different sets of data streams and also differently-sized sets of atomic propositions.  In general, as the set of atomic propositions grows so does the cost of query checking; our goal was to determine how long it would take for our implementations to reach a ``time out'' of one hour of computing time.

More specifically,
a fixed number of atomic propositions corresponding to products and promotions were sampled as set $A \subset \AP$.
The restricted stream $D[A]$, as well as the restricted stream sets $\dquarter[A]$ and $\dmonth[A]$ were then checked against instances of the query classes defined above:  an instance $\phi_{1}[\var] \in \Phi_{1}[\var]$ is chosen by selecting $prod_{i} \in A$, similarly an instance $\phi_{2}[\var] \in \Phi_{2}[\var]$ wass chosen by selecting $prod_{i}, promo_{j} \in A$.

Performance results are presented for computing the single-stream query checking problems $\qc 1(D, \phi)$, $\qc(\dquarter[A], \phi)$, and $\qc(\dmonth[A], \phi[\var])$ for $\phi \in \{\phi_{1}[\var], \phi_{2}[\var]\}$.  Experiments are repeated by re-sampling $A$, with and statistical averages reported over all experiments in the same category. 
 
\figref{fig:ltlqc-average-samples} shows the performance results for $\qc 1(D[A], \phi[\var])$.
Average computation time is reported, along with the number of shatterable edges and distinct shatterable labels in the composed FQA produced during query checking.
\figref{fig:ltlqc-average-samples-quarterly} reports the same statistics for $\qc (\dquarter[A], \phi[\var])$, as does \figref{fig:ltlqc-average-samples-monthly} for $\qc (\dmonth[A], \phi[\var])$.  In these cases, the average running time of the individual calls to $\qc1$ made by each of the calls to $\qc$ is reported.
Individual invocations of $\qc1$ that run for over 1 hour are terminated and aggregates affected are reported as timeout (``t/o'').
For $\dquarter[A]$ (as well as $\dmonth[A]$), we note that it could be interesting to compute a new $A$ for each new $\qc1$ computation (as $A$ is the source of experimental randomness) rather than computing single-stream query checking for \emph{all} streams in $\dquarter[A]$.


\begin{figure}[!htb]
  \centering
  \begin{tabular}{|c|c|c|c|c|c|}
    \hline
    \# products & \# promotions & $\phi_{1}[\var]$ avg time & avg shatterable edges & avg shatterable labels \\
    \hline
    2 & 0 &  $2.387 \pm 0.18$ & $1642.4 \pm 9.7$ & $4.0\pm 0.0$ \\
    3 & 0 &  $3.460 \pm 0.42$ & $1640.5 \pm 8.3$ & $8.0 \pm 0.0$ \\
    4 & 0 &  $17.35 \pm 6.01$ & $1640.4 \pm 8.3$ & $16.0 \pm 0.0$ \\
    5 & 0 &  t/o & t/o & t/o \\
    2 & 1 &  $3.311 \pm 0.19$ & $1642.2 \pm 9.6$ & $7.7 \pm 0.94$ \\
    3 & 1 &  $14.58 \pm 6.70$ & $1641.9 \pm 8.9$ & $15.2 \pm 2.1$ \\
    4 & 1 &  t/o & t/o & t/o \\
    2 & 2 &  $13.01 \pm 6.27$ & $1640.9 \pm 9.7$ & $14.5\pm 2.4$ \\
    3 & 2 &  t/o & t/o & t/o \\
    2 & 3 &  t/o & t/o & t/o \\
    \hline
  \end{tabular}

  \begin{tabular}{|c|c|c|c|c|c|}
    \hline
    \# products & \# promotions & $\phi_{2}[\var]$ avg time & avg shatterable edges & avg shatterable labels \\
    \hline
    2 & 1 & $2.741 \pm 0.44$ & $1326.2 \pm 253.7 $ & $7.8\pm 0.87$ \\
    3 & 1 & $8.511 \pm 5.73$ & $1379.4 \pm 205.9$ & $15.3\pm 2.1$ \\
    4 & 1 & t/o & t/o & t/o \\
    2 & 2 & $7.627 \pm 5.34$ & $1320.7 \pm 266.2$ & $14.6\pm 2.4$ \\
    3 & 2 & t/o & t/o & t/o \\
    2 & 3 & t/o & t/o & t/o \\
    \hline
  \end{tabular}

  \caption[Runtime Performance of Finite LTL Query Checking]{Average performance of $\qc 1(D[A], \phi[\var])$.  $\phi_{2}[\var]$ is not applicable when no promotions are present in $A$.  Data stream is of length 1095.  Averages taken over 100 samples.  All times in seconds.}\label{fig:ltlqc-average-samples}

\end{figure}

\begin{figure}[!htb]
  \centering
  \begin{tabular}{|c|c|c|c|c|c|}
    \hline
    \# products & \# promotions & $\phi_{1}[\var]$ avg time & avg shatterable edges & avg shatterable labels \\
    \hline
    2 & 0 &  $0.527 \pm 0.03$ & $134.7 \pm 4.3$ & $4.0 \pm 0.0$ \\
    3 & 0 &  $0.632 \pm 0.03$ & $135.1 \pm 3.9$ & $8.0\pm 0.0$ \\
    4 & 0 &  $1.537 \pm 0.39$ & $134.8 \pm 3.8$ & $16.0\pm 0.13$ \\
    5 & 0 &  t/o & t/o & t/o \\
    2 & 1 &  $0.609 \pm 0.04$ & $135.4 \pm 3.5$ & $6.6 \pm 1.8$ \\
    3 & 1 &  $1.115 \pm 0.41$ & $135.4 \pm 3.8$ & $12.9 \pm 3.5$ \\
    4 & 1 &  t/o & t/o & t/o \\
    2 & 2 &  $0.760 \pm 0.23$ & $135.2 \pm 3.6$ & $7.7 \pm 3.2$ \\
    3 & 2 &  t/o & t/o & t/o \\
    2 & 3 &  $35.00 \pm 202.8$ & $134.8 \pm 3.7$ & $13.0\pm 5.2$ \\
    \hline
  \end{tabular}

  \begin{tabular}{|c|c|c|c|c|c|}
    \hline
    \# products & \# promotions & $\phi_{2}[\var]$ avg time & avg shatterable edges & avg shatterable labels \\
    \hline
    2 & 1 & $0.541 \pm 0.04$ & $113.2 \pm 24.4$ & $6.3 \pm 1.8$ \\
    3 & 1 & $0.765 \pm 0.23$ & $109.2 \pm 35.8$ & $12.2\pm 3.1$ \\
    4 & 1 & $464.60 \pm 1482.4$ & $115.4 \pm 33.7$ & $20.6\pm 5.4$ \\
    2 & 2 & $0.650 \pm 0.12$ & $111.3 \pm 29.4$ & $8.8 \pm 3.3$ \\
    3 & 2 & $112.86 \pm 568.2$ & $113.86\pm 30.2$ & $17.71 \pm 5.71$ \\
    2 & 3 & $78.05 \pm 636.21$ & $107.205 \pm 35.8$ & $12.4 \pm 6.02$ \\
    \hline
  \end{tabular}

  \caption[Runtime Performance of Finite LTL Query Checking]{Performance of $\qc 1(\pi \in \dquarter[A], \phi[\var])$.  Data streams are of length 91 or 92.  Averages taken over 60 trials.  All times in seconds.}\label{fig:ltlqc-average-samples-quarterly}

\end{figure}

\begin{figure}[!htb]
  \centering
  \begin{tabular}{|c|c|c|c|c|c|}
    \hline
    \# products & \# promotions & $\phi_{1}[\var]$ avg time & avg shatterable edges & avg shatterable labels \\
    \hline
    2 & 0 &  $0.399 \pm 0.02$ & $43.9 \pm 2.4$ & $4.0 \pm 0.0$ \\
    3 & 0 &  $0.442 \pm 0.02$ & $43.9 \pm 2.4$ & $7.9 \pm 0.3$ \\
    4 & 0 &  $0.546 \pm 0.06$ & $43.8 \pm 2.4$ & $13.9 \pm 1.1$ \\
    5 & 0 &  $5.248 \pm 7.27$ & $43.7 \pm 2.4$ & $19.8 \pm 1.7$ \\
    6 & 0 &  $140.6 \pm 286.0$ & $43.8 \pm 2.7$ & $24.4 \pm 2.1$ \\
    7 & 0 &  $619.1 \pm 974.9$ & $44.0 \pm 2.3$ & $26.7 \pm 1.8$ \\
    2 & 1 &  $0.412 \pm 0.02$ & $43.8 \pm 2.4$ & $5.4 \pm 1.7$ \\
    3 & 1 &  $0.487 \pm 0.30$ & $43.6 \pm 2.4$ & $9.2 \pm 2.2$ \\
    4 & 1 &  $1.445 \pm 2.27$ & $43.7 \pm 2.6$ & $15.5 \pm 2.6$ \\
    2 & 2 &  $0.444 \pm 0.40$ & $43.5 \pm 2.5$ & $6.3 \pm 2.4$ \\
    3 & 2 &  $0.668 \pm 0.60$ & $43.8 \pm 2.6$ & $11.6 \pm 3.3$ \\
    2 & 3 &  $0.504 \pm 0.12$ & $43.7 \pm 2.5$ & $8.4 \pm 3.1$ \\
    \hline
  \end{tabular}

  \begin{tabular}{|c|c|c|c|c|c|}
    \hline
    \# products & \# promotions & $\phi_{2}[\var]$ avg time & avg shatterable edges & avg shatterable labels \\
    \hline
    2 & 1 & $0.372 \pm 0.02$ & $37.1 \pm 10.6$ & $5.5 \pm 1.6$ \\
    3 & 1 & $0.455 \pm 0.04$ & $38.6 \pm 9.8$ & $10.6 \pm 2.8$ \\
    4 & 1 & $3.547 \pm 9.24$ & $40.5 \pm 8.2$ & $16.0 \pm 3.8$ \\
    2 & 2 & $0.392 \pm 0.02$ & $39.5 \pm 7.9$ & $5.9 \pm 1.4$ \\
    3 & 2 & $1.571 \pm 5.11$ & $38.4 \pm 8.8$ & $13.1\pm 4.4$ \\
    2 & 3 & $0.493 \pm 0.14$ & $41.9 \pm 7.0$ & $7.6 \pm 3.1$ \\
    \hline
  \end{tabular}

  \caption[Runtime Performance of Finite LTL Query Checking]{Performance of $\qc 1(\pi \in \dmonth[A], \phi[\var])$.  Data streams are of length 30 or 31.  Averages taken over 360 trials.  All times in seconds.}\label{fig:ltlqc-average-samples-monthly}

\end{figure}

It can be observed that the time cost of query checking increases with both the number of atomic propositions considered ($|A|$), as well as with the length of the data stream.  The number of shatterable edges present in the composed FQA also is (perhaps not surprisingly) correlated with data-stream length.  The number of shatterable labels (i.e. $|SL|$ from Algorithm \ref{alg:ShatterFQA}) has in the worse case an exponential relationship with runtime and so efforts to lower it (primarily by restricting $|A|$) will afford runtime savings.
We also note that there is a potential for the tradeoff of supporting a larger set of atomic propositions if the data streams can be shortened in length.  This is especially noticeable for breakdowns of $\dmonth[A]$, where sets $A \subset \AP$ with $|A| = 6$ and $|A| = 7$ are processable.

\section{Conclusions and Future Work}
We have presented our work on performing LTL query checking over normalized finite data
streams using an automaton-theoretic approach.  Solving such queries can aid in
the comprehension of system behavior underlying observed execution traces,
which might allow for one to better understand or diagnose a system when it
cannot be accessed directly.

In terms of future work, more experimentation with our techniques would yield more insight into their applicability and use cases.  In particular, using datasets consisting of system logs or financial data would help gauge our methods' utility in these data-rich areas.  We would also like to investigate linear temporal logics that incorporate reasoning about discrete time, such as a discrete-time version of MITL~\cite{alur1996benefits}, as a basis for query checking over non-normalized finite data streams, as well as other, more expressive linear-time logics such as Linear Dynamic Logic~\cite{degiacomo:ijcai2013} and Allen Linear Temporal Logic~\cite{rocsu2006allen}.

\bibliography{local}

\begin{thebibliography}{10}

\bibitem{ackermann:rv2010}
Chris Ackermann, Rance Cleaveland, Samuel Huang, Arnab Ray, Charles Shelton,
  and Elizabeth Latronico.
\newblock Automatic requirement extraction from test cases.
\newblock In Howard Barringer, Ylies Falcone, Bernd Finkbeiner, Klaus Havelund,
  Insup Lee, Gordon Pace, Grigore Ro{\c{s}}u, Oleg Sokolsky, and Nikolai
  Tillmann, editors, {\em Runtime Verification}, pages 1--15, Berlin,
  Heidelberg, 2010. Springer Berlin Heidelberg.

\bibitem{agrawal:icde95}
Rakesh Agrawal and Ramakrishnan Srikant.
\newblock Mining sequential patterns.
\newblock In {\em Proceedings of the Eleventh International Conference on Data
  Engineering}, ICDE '95, pages 3--14, Washington, DC, USA, 1995. IEEE Computer
  Society.

\bibitem{ahlswede:ejc1996}
Rudolf Ahlswede and Ning Cai.
\newblock Incomparability and intersection properties of boolean interval
  lattices and chain posets.
\newblock {\em European Journal of Combinatorics}, 17(8):677 -- 687, 1996.

\bibitem{baier:2008}
Christel Baier and Joost-Pieter Katoen.
\newblock {\em Principles of model checking}.
\newblock MIT press, 2008.

\bibitem{browne:tcs1988}
M.C. Browne, E.M. Clarke, and O.~Grumberg.
\newblock Characterizing finite kripke structures in propositional temporal
  logic.
\newblock {\em Theoretical Computer Science}, 59(1):115 -- 131, 1988.

\bibitem{bruns:lics2001}
G.~{Bruns} and P.~{Godefroid}.
\newblock Temporal logic query checking.
\newblock In {\em Proceedings 16th Annual IEEE Symposium on Logic in Computer
  Science}, pages 409--417, June 2001.

\bibitem{chan:cav2000}
William Chan.
\newblock Temporal-logic queries.
\newblock In E.~Allen Emerson and Aravinda~Prasad Sistla, editors, {\em
  Computer Aided Verification}, pages 450--463, Berlin, Heidelberg, 2000.
  Springer Berlin Heidelberg.

\bibitem{chockler:hsvt2011}
Hana Chockler, Arie Gurfinkel, and Ofer Strichman.
\newblock Variants of ltl query checking.
\newblock In Sharon Barner, Ian Harris, Daniel Kroening, and Orna Raz, editors,
  {\em Hardware and Software: Verification and Testing}, pages 76--92, Berlin,
  Heidelberg, 2011. Springer Berlin Heidelberg.

\bibitem{clarke:lop1981}
Edmund~M. Clarke and E.~Allen Emerson.
\newblock Design and synthesis of synchronization skeletons using branching
  time temporal logic.
\newblock In Dexter Kozen, editor, {\em Logics of Programs}, pages 52--71,
  Berlin, Heidelberg, 1981. Springer Berlin Heidelberg.

\bibitem{clarke:toplas1986}
Edmund~M. Clarke, E~Allen Emerson, and A~Prasad Sistla.
\newblock Automatic verification of finite-state concurrent systems using
  temporal logic specifications.
\newblock {\em ACM Transactions on Programming Languages and Systems (TOPLAS)},
  8(2):244--263, 1986.

\bibitem{clarke:2018}
Edmund~M Clarke, Thomas~A Henzinger, Helmut Veith, and Roderick Bloem.
\newblock {\em Handbook of model checking}, volume~10.
\newblock Springer, 2018.

\bibitem{degiacomo:aaai2014}
Giuseppe De~Giacomo, Riccardo De~Masellis, and Marco Montali.
\newblock Reasoning on ltl on finite traces: Insensitivity to infiniteness.
\newblock In {\em Proceedings of the Twenty-Eighth AAAI Conference on
  Artificial Intelligence}, AAAI’14, page 1027–1033, Qu\'{e}bec City,
  Qu\'{e}bec, Canada, 2014. AAAI Press.

\bibitem{degiacomo:ijcai2013}
Giuseppe De~Giacomo and Moshe~Y. Vardi.
\newblock Linear temporal logic and linear dynamic logic on finite traces.
\newblock In {\em Proceedings of the Twenty-Third International Joint
  Conference on Artificial Intelligence}, IJCAI ’13, page 854–860, Beijing,
  China, 2013. AAAI Press.

\bibitem{duret.16.atva2}
Alexandre Duret-Lutz, Alexandre Lewkowicz, Amaury Fauchille, Thibaud Michaud,
  Etienne Renault, and Laurent Xu.
\newblock Spot 2.0 --- a framework for {LTL} and $\omega$-automata
  manipulation.
\newblock In {\em Proceedings of the 14th International Symposium on Automated
  Technology for Verification and Analysis (ATVA'16)}, volume 9938 of {\em
  Lecture Notes in Computer Science}, pages 122--129. Springer, October 2016.

\bibitem{eisner:2018}
Cindy Eisner and Dana Fisman.
\newblock {\em Functional Specification of Hardware via Temporal Logic}, pages
  795--829.
\newblock Springer International Publishing, Cham, 2018.

\bibitem{fionda:aaai2016}
Valeria Fionda and Gianluigi Greco.
\newblock The complexity of ltl on finite traces: Hard and easy fragments.
\newblock In {\em Proceedings of the Thirtieth AAAI Conference on Artificial
  Intelligence}, AAAI'16, pages 971--977, Phoenix, Arizona, 2016. AAAI Press.

\bibitem{fradkin:kis2015}
Dmitriy Fradkin and Fabian M{\"o}rchen.
\newblock Mining sequential patterns for classification.
\newblock {\em Knowledge and Information Systems}, 45(3):731--749, Dec 2015.

\bibitem{georgala:ecai2016}
Kleanthi Georgala, Mohamed~Ahmed Sherif, and Axel-Cyrille~Ngonga Ngomo.
\newblock An efficient approach for the generation of allen relations.
\newblock In {\em Proceedings of the Twenty-Second European Conference on
  Artificial Intelligence}, ECAI’16, page 948–956, The Hague, The
  Netherlands, 2016. IOS Press.

\bibitem{gurfinkel:tse2003}
Arie Gurfinkel, Marsha Chechik, and Benet Devereux.
\newblock Temporal logic query checking: A tool for model exploration.
\newblock {\em {IEEE} Transactions on Software Engineering}, 29(10):898--914,
  November 2003.

\bibitem{holzmann:2004}
Gerard~J Holzmann.
\newblock {\em The SPIN model checker: Primer and reference manual}, volume
  1003.
\newblock Addison-Wesley Reading, 2004.

\bibitem{huang:avocs2017}
Samuel Huang and Rance Cleaveland.
\newblock Query checking for linear temporal logic.
\newblock In Laure Petrucci, Cristina Seceleanu, and Ana Cavalcanti, editors,
  {\em Critical Systems: Formal Methods and Automated Verification}, pages
  34--48, Cham, 2017. Springer International Publishing.

\bibitem{huang:arxiv2019}
Samuel Huang and Rance Cleaveland.
\newblock A tableau construction for finite linear-time temporal logic, 2019.

\bibitem{leucker:2017}
Martin Leucker.
\newblock {\em Runtime Verification for Linear-Time Temporal Logic}, pages
  151--194.
\newblock Springer International Publishing, Cham, 2017.

\bibitem{meurer:pcs2017}
Aaron Meurer, Christopher~P. Smith, Mateusz Paprocki, Ond\v{r}ej
  \v{C}ert\'{i}k, Sergey~B. Kirpichev, Matthew Rocklin, AMiT Kumar, Sergiu
  Ivanov, Jason~K. Moore, Sartaj Singh, Thilina Rathnayake, Sean Vig, Brian~E.
  Granger, Richard~P. Muller, Francesco Bonazzi, Harsh Gupta, Shivam Vats,
  Fredrik Johansson, Fabian Pedregosa, Matthew~J. Curry, Andy~R. Terrel,
  \v{S}t\v{e}p\'{a}n Rou\v{c}ka, Ashutosh Saboo, Isuru Fernando, Sumith Kulal,
  Robert Cimrman, and Anthony Scopatz.
\newblock Sympy: symbolic computing in python.
\newblock {\em PeerJ Computer Science}, 3:e103, January 2017.

\bibitem{pnueli:fcs1977}
A.~{Pnueli}.
\newblock The temporal logic of programs.
\newblock In {\em 18th Annual Symposium on Foundations of Computer Science
  (sfcs 1977)}, pages 46--57, Oct 1977.

\bibitem{rosu:rv2016}
Grigore Ro{\c{s}}u.
\newblock Finite-trace linear temporal logic: Coinductive completeness.
\newblock In Yli{\`e}s Falcone and C{\'e}sar S{\'a}nchez, editors, {\em Runtime
  Verification}, pages 333--350, Cham, 2016. Springer International Publishing.

\bibitem{sistla:jacm1985}
A~Prasad Sistla and Edmund~M Clarke.
\newblock The complexity of propositional linear temporal logics.
\newblock {\em Journal of the ACM (JACM)}, 32(3):733--749, 1985.

\bibitem{vardi:automata1986}
Moshe~Y Vardi and Pierre Wolper.
\newblock An automata-theoretic approach to automatic program verification.
\newblock In {\em Proceedings of the First Symposium on Logic in Computer
  Science}, pages 322--331. IEEE Computer Society, 1986.

\bibitem{1985-WolperWOLTTM}
Pierre Wolper.
\newblock The tableau method for temporal logic: An overview.
\newblock {\em Logique Et Analyse}, 28(110-111):119--136, 1985.

\bibitem{causality_dataset}
Causality workbench team.
\newblock {PROMO}: Simple causal effects in time series, 08 2008.

\end{thebibliography}
\end{document}